\documentclass[a4paper,english,cleveref, thm-restate, authorcolumns]{lipics-v2021}

\usepackage{nicefrac}

\listfiles

\graphicspath{{./figures/}}

\bibliographystyle{plainurl}

\title{Minimum Scan Cover and Variants - \texorpdfstring{\newline}{} Theory and Experiments}

\nolinenumbers

\titlerunning{Minimum Scan Cover and Variants}

\author{Kevin Buchin}{Department of Mathematics \& Computer Science\\ TU Eindhoven, Eindhoven, The Netherlands}{k.a.buchin@tue.nl}{https://orcid.org/0000-0002-3022-7877}{}
\author{S\'{a}ndor P. Fekete}{Department of Computer Science\\ TU Braunschweig, Braunschweig, Germany}{s.fekete@tu-bs.de}{https://orcid.org/0000-0002-9062-4241}{}
\author{Alexander Hill}{Department of Computer Science\\ TU Braunschweig, Braunschweig, Germany}{a.hill@tu-bs.de}{https://orcid.org/0000-0002-9270-9871}{}
\author{Linda Kleist}{Department of Computer Science\\ TU Braunschweig, Braunschweig, Germany}{l.kleist@tu-bs.de}{https://orcid.org/0000-0002-3786-916X}{}
\author{Irina Kostitsyna}{Department of Mathematics \& Computer Science\\ TU Eindhoven, The Netherlands}{i.kostitsyna@tue.nl}{https://orcid.org/0000-0003-0544-2257}{}
\author{Dominik Krupke}{Department of Computer Science\\ TU Braunschweig, Braunschweig, Germany}{d.krupke@tu-bs.de}{https://orcid.org/0000-0003-1573-3496}{}
\author{Roel Lambers}{Department of Mathematics \& Computer Science\\ TU Eindhoven, The Netherlands}{r.lambers@tue.nl}{https://orcid.org/0000-0002-0314-6094}{}
\author{Martijn Struijs}{Department of Mathematics \& Computer Science\\ TU Eindhoven, The Netherlands}{m.a.c.struijs@tue.nl}{https://orcid.org/0000-0002-0116-7238}{}

\authorrunning{Buchin, Fekete, Hill, Kleist, Kostitsyna, Krupke, Lambers and Struijs} 

\Copyright{Kevin Buchin, S\'andor P.\,Fekete, Alexander Hill, Linda Kleist, Irina Kostitsyna, Dominik Krupke, and Martijn Struijs} 

\ccsdesc[100]{{Theory of computation $\rightarrow$ Design and analysis of algorithms, Computational geometry; Applied computing → Operations research}} 

\keywords{Graph scanning, angular metric, makespan, energy, bottleneck, complexity, approximation, algorithm engineering, mixed-integer programming, constraint programming} 

\category{} 


\supplement{\url{https://github.com/ahillbs/minimum_scan_cover}}


\funding{Work at TU Braunschweig was partially supported under grant FE407/21-1, ``Computational Geometry: Solving Hard Optimization Problems'' (CG:SHOP).}


\hideLIPIcs  

\EventLongTitle{arXiv}
\EventShortTitle{arXiv}
\EventAcronym{arXiv}

\clubpenalty=10000
\widowpenalty=10000

\usepackage{xspace}
\usepackage[binary-units=true]{siunitx} 


\newcommand{\MSClength}{MSC-MS\xspace}
\newcommand{\MSCtotEnergy}{MSC-TE\xspace}
\newcommand{\MSClocEnergy}{MSC-BE\xspace}

\newcommand{\MIPone}{MIP-1\xspace}
\newcommand{\MIPtwo}{MIP-2\xspace}
\newcommand{\MIPthree}{MIP-3\xspace}
\newcommand{\CPone}{CP-1\xspace}
\newcommand{\CPtwo}{CP-2\xspace}

\newcommand{\cone}{cone\xspace}
\newcommand{\cupdot}{\mathbin{\mathaccent\cdot\cup}}
\newcommand{\mincover}{$\Lambda$-cover\xspace}

\crefname{claim}{Claim}{Claims}
\crefname{figure}{Figure}{Figures}


\begin{document}

\maketitle

\begin{abstract}
  We consider a spectrum of geometric optimization problems motivated by
contexts such as satellite communication and astrophysics.
In the problem \textsc{Minimum Scan Cover with Angular Costs}, we are given a
graph $G$ that is embedded in Euclidean space. The edges of $G$ need to be scanned,
i.e., probed from both of their vertices. In order to scan their edge, two
vertices need to face each other; changing the heading of a vertex incurs
some cost in terms of energy or rotation time that is proportional to the
corresponding rotation angle. Our goal is to compute schedules that minimize the following objective
functions: 
(i) in \textsc{Minimum Makespan Scan Cover} (\MSClength), this is the time until all edges are scanned; (ii) in \textsc{Minimum Total Energy Scan Cover} (\MSCtotEnergy),
the sum of all rotation angles; (iii) in \textsc{Minimum Bottleneck Energy Scan Cover} (\MSClocEnergy),
the maximum total rotation angle at one vertex.

Previous theoretical work on \MSClength revealed a close connection to graph coloring and the cut cover problem,
leading to hardness and approximability results. 
In this paper, we present polynomial-time algorithms for 1D instances of \MSCtotEnergy and \MSClocEnergy,
but NP-hardness proofs for bipartite 2D instances. For bipartite graphs in 2D, we also give
2-approximation algorithms for both \MSCtotEnergy and \MSClocEnergy. 
Most importantly, we provide
a comprehensive study of practical methods for all three problems.
We compare three different mixed-integer programming and two constraint programming
approaches, and show how to compute provably optimal solutions for geometric instances
with up to \num{300} edges.
Additionally, we compare the performance of different meta-heuristics for even larger instances.

\end{abstract}
\newpage
\section{Introduction}
For many aspects of wireless communication, the relative direction,
i.e., the angle of visibility between different locations, plays a crucial role.
A particularly
striking example occurs in the context of inter-satellite communication,
which requires focused transmission, with communication partners facing each
other with directional, paraboloid antennas or laser beams. This makes it
impossible to exchange information with multiple partners at once. Moreover, a
change of communication partner requires a change of heading, which is costly
in the context of space missions with limited resources, making it worthwhile
to invest in good schedules. Problems of this type do not only
arise from long-distance communication. They also come into play when astro-
and geophysical measurements are to be performed, in which groups of spacecraft
can determine physical quantities not just at their current locations, but also
along their common line of sight; see~\cite{korth2002particle} for a description.

In previous theoretical work~\cite{DBLP:conf/compgeom/FeketeKK20},
we considered an optimization problem arising from this context: How can we
schedule a given set of intersatellite communications, such that the overall
timetable is as efficient as possible? In the problem
\textsc{Minimum Scan Cover with Angular Costs} (MSC), the task is to establish a
collection of connections between a given set of locations, described by a
graph $G = (V,E)$ that is embedded in space. For any connection (or scan) of an
edge, the two involved vertices need to face each other; changing the heading
of a vertex to cover a different connection takes an amount of time
proportional to the corresponding rotation angle. In~\cite{DBLP:conf/compgeom/FeketeKK20},
the goal considered was to minimize the time
until all tasks are completed, i.e., compute a geometric schedule of minimum
makespan.

Given the importance of conserving energy on
space (or drone) missions, this \textsc{Minimum Makespan Scan Cover} (\MSClength)
is not the only important objective:
In \textsc{Minimum Total Energy Scan Cover} (\MSCtotEnergy), the goal is to minimize
the sum of all rotation angles; in \textsc{Minimum Bottleneck Energy Scan Cover} (\MSClocEnergy),
the task is to limit the energy used by any one vertex by
minimizing the maximum total rotation at one vertex.

In this paper, we complement the previous theoretical results on \MSClength
(hardness and approximation)~\cite{DBLP:conf/compgeom/FeketeKK20} by presenting an NP-hardness proof and a 2-approximation for \MSCtotEnergy and \MSClocEnergy for bipartite graphs in two dimensions.
For one-dimensional instances of \MSCtotEnergy and \MSClocEnergy, we show a polynomial time algorithm and an upper bound independent of the chromatic number, which shows a fundamental difference to \MSClength.
Most importantly, we provide
a comprehensive study of practical methods for all three objective functions.
We compare three different mixed-integer programming (MIP) and two constraint programming (CP)
approaches, and show how to compute provably optimal solutions for geometric instances
with up to \num{300} edges.
Additionally, we evaluate the practical performance of approximation algorithms and heuristics for even larger instances.

\subsection{Previous Work}
The use of directional antennas has introduced a number of geometric questions.
The paper at hand expands on previous work of
Fekete, Kleist, and Krupke~\cite{DBLP:conf/compgeom/FeketeKK20}, who investigated \MSClength and identified a close connection to graph coloring and the (directed) cut cover number.
More precisely,  \MSClength in 1D and 2D is in $\Theta(\log \chi(G))$, which
implies that even in 1D, there exists no constant-factor approximation for \MSClength. 
For 2D, they present a 4.5-approximation for bipartite instances
and show inapproximability for a constant better than $\nicefrac{3}{2}$. This
yields an $O(c)$-approximation for $k$-colored graphs with $k\leq \chi(G)^c$.

Further problems involving directional antennas have been considered by
Carmi et al.~\cite{carmi2011connectivity}, who study
the $\alpha$-MST problem. This problem arises from finding orientations of directional antennas with $\alpha$-cones, such
that the connectivity graph yields a spanning tree of minimum weight, based on bidirectional communication.
They prove that for $\alpha<\nicefrac \pi 3$, a solution may not exist, while $\alpha\geq\nicefrac \pi 3$ always suffices.
See Aschner and Katz~\cite{aschner2017bounded} for more recent hardness proofs and constant-factor
approximations for some specific values of $\alpha$.

Many other geometric optimization problems deal with turn
cost. Arkin et al.~\cite{arkin2001optimal,arkin2005optimal}
show hardness
of finding an optimal milling tour with turn cost, even in relatively constrained settings,
and give approximation algorithms.
The complexity of finding an optimal cycle cover in a 2-dimensional grid
graph was stated as \emph{Problem~{53}} in \emph{The Open Problems
	Project}~\cite{openproblemproject} and shown
to be NP-hard in~\cite{fk-ctcct-19}, which also provides constant-factor
approximations; practical methods and results are given in~\cite{ALENEX19},
and visualized in the video~\cite{bdf+-zzmmd-17}.

Finding a fastest roundtrip for a set of points in the plane for which the
travel time depends only on the turn cost is called the \textsc{Angular Metric
	Traveling Salesman Problem}. Aggarwal et al.~\cite{aggarwal2000angular} prove
hardness and provide an $O(\log n)$ approximation algorithm.
For the abstract version on graphs in which ``turns'' correspond to weighted changes between edges,
Fellows et al.~\cite{fellows2009abstract} show that the problem is fixed-parameter tractable in the number of turns, the treewidth, and the maximum degree.
Fekete and Woeginger~\cite{fekete1997angle} consider the problem of connecting a set of points
by a tour in which the angles of successive edges are
constrained.

\MSClength is a special case of scheduling in which the cost of a current job
depends on the sequence of the already processed ones; e.g., Allahverdi et
al.~\cite{journals/eor/Allahverdi15,allahverdi1999review,allahverdi2008survey}
provide a comprehensive overview, especially on practical work.  In the context
of earth observation, Li et al.~\cite{li2017hybrid} and Augenstein et
al.~\cite{augenstein2016optimal} describe MIPs and heuristics to schedule image
acquisition and downlink for satellites for which rotation and setup costs are
taken into account.

\subsection{Preliminaries and Problem Definitions}
For all considered versions of \textsc{Minimum Scan Cover} (\textsc{MSC}),
the \emph{input} consists of a (straight-line) embedded (not necessarily crossing-free) graph $G=(V,E)$ with a finite vertex set $V
\subset \mathbb{R}^2$. We refer to the elements of $V$ as \emph{points}
when their specific locations in $\mathbb{R}^2$ are relevant; if we focus
on graph properties, we may also refer to them as \emph{vertices}. We denote
the undirected edge between $u, v \in V$ by $uv$.
For $v \in V$, we let $N(v) = \{u \in V : uv \in E\}$ be all vertices adjacent to $v$, and $E(v) = \{ uv : u\in N(v)\}$ be all edges incident to $v$. For two adjacent edges $uv, vw \in E(v)$, let $\alpha( uv, vw ) \in [0,\ang{180}]$ denote the smaller angle
between the lines supporting the segments $uv$ and $vw$.
The \emph{output} for each problem is a \emph{scan cover} $S : E \rightarrow \mathbb{R}^+$, such that
 for all pairs of adjacent edges  $e, e'$, we have $|S(e) - S(e')| \geq \alpha(e,e')$.
The geometric interpretation of a scan cover is that all points $v \in V$ have a \textit{heading} that can change over time, and that if $S(uv) = t$ then $u$ and $v$ \emph{face} each other at time~$t$. In this case, we say that the edge $uv$ is \emph{scanned} at time~$t$. Thus, the above condition on $S$ guarantees that $S$ complies with the necessary rotation time if rotation speed is bounded by~$1$.

A \emph{rotation scheme} describes the geometric change of headings of the
vertices over a time interval of length $T$, i.e., it is a map $r\colon V\times
[0,T]\mapsto [ \ang{0}, \ang{360}]$. 
The \emph{total rotation angle} of a vertex $v$ in $r$ is the total amount that $v$ rotates over $[0,T]$.
For a given scan cover $S$, we are particularly interested in edges that are scanned consecutively. Therefore, let $\nu_S(e,e') = 1$ if $e$ and $e'$ share exactly one vertex $v$ and the edge $e'$ is scanned directly after $e$ at $v$; otherwise $\nu_S(e,e') = 0$.

\subparagraph{The Problems}
We consider the following three problems, defined by their respective objectives.
For a given graph $G=(V,E)$ with vertices in the plane, find a scan cover $S$ with
\begin{itemize}
	\item Minimum Makespan (\MSClength): \qquad \qquad $\displaystyle \min \max_{e \in E} S(e)$
	\item {Minimum Total Energy (\MSCtotEnergy):} \qquad \ \ $\displaystyle \min \sum_{v\in V} \sum_{e,e' \in E(v)}  \alpha(e,e')\cdot \nu_S(e,e')$
	\item {Minimum Bottleneck Energy (\MSClocEnergy):} \ $\displaystyle \min \max_{v \in V} \sum_{e,e' \in E(v)} \alpha(e,e') \cdot \nu_S(e,e')$
\end{itemize}

	Concentrating on the expensive and algorithmically challenging part of efficient rotations between the edges, we do not fix the initial heading of the satellites. In fact, all algorithms can be easily adapted to handle fixed initial headings.
  Furthermore, for every of the three objectives,  an $f$-approximation  can be converted into a $(f+1)$-approximation for the problem variant with fixed initial headings.

For a vertex $v$, we  denote by $\Lambda(v)$ the minimum angle, such
that a cone of this angle with apex $v$ contains all edges in $E(v)$.
We call such a cone a $\Lambda$-\cone of $v$ and call the complement of such a cone an \emph{outer} \cone of $v$.
A \emph{\mincover} is a scan cover for which every vertex $v$ rotates in a single direction, either clockwise or counterclockwise, with a total rotation angle equal to $\Lambda(v)$. Note that different vertices can rotate in different directions. A \mincover minimizes both the \MSCtotEnergy and \MSClocEnergy objectives.

\subsection{Outline and Results}

This paper consists of a theoretical part (\cref{sec:complexity}) and a practical part (\cref{sec:practical}).
Our theoretical results complement the work on \MSClength~\cite{DBLP:conf/compgeom/FeketeKK20}
by hardness and approximation results for the two new objectives \MSCtotEnergy and \MSClocEnergy.
In \cref{sec:complexity}, we show that both problems can be solved efficiently in 1D;
on the other hand, we prove that they  are NP-hard in 2D, even for bipartite
graphs. Finally, we complement the hardness results
by providing  2-approximations for bipartite graphs and $O(\log
n)$-approximations for general graphs. Our
practical study in \cref{sec:practical} considers optimal solutions in
\cref{sec:practical:exact} and heuristic solutions in
\cref{sec:practical:heuristic}.
For optimal solutions, we develop three mixed integer linear programs (MIPs), as well as
two constraint programs (CPs) and evaluate their practical performance on
a suite of benchmark instances.  Solving instances of
\MSCtotEnergy and \MSClocEnergy to provable optimality turned out to be quite difficult; for \MSClength, we were able
to solve instances with up to \num{300} edges, based on one CP.
In addition, we compared the solution quality of four (meta-)heuristics and the approximation algorithms on larger instances with up to \num{800} edges.
In our experiments, a genetic algorithm and the intermediate solution after timeout of one CP produced the best solutions.

\section{Complexity results}
\label{sec:complexity}
Fekete, Kleist, and Krupke studied the computational complexity of \MSClength~\cite{DBLP:conf/compgeom/FeketeKK20}.
In this section, we provide new results for \MSClocEnergy and \MSCtotEnergy.

For \MSClength in 1D, when all vertices are placed on a line,  there exists no constant-factor approximation unless $P=NP$~\cite{DBLP:conf/compgeom/FeketeKK20}. In contrast, we show that \MSCtotEnergy and \MSClocEnergy in 1D can be solved efficiently.
Refer to Appendix~\ref{appendix:1d} for a proof of the following theorem.

\begin{restatable}{theorem}{onedsimple}\label{thm:1dsimple}
\MSCtotEnergy and \MSClocEnergy in 1D are in $P$.
Moreover, denoting by $k$ the number of vertices with neighbors to both sides, the objective value is $0$
for $k=0$, while for $k>0$ it is $\ang{180}\cdot k$ for \MSCtotEnergy and $\ang{180}$ for \MSClocEnergy.
\end{restatable}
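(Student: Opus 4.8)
The plan is to first reduce the problem to a purely combinatorial one and then prove matching lower and upper bounds, the latter via an explicit construction that is also a \mincover. Place all points on the $x$-axis, sorted as $v_1 < v_2 < \dots < v_n$. The key observation is that in 1D every incident edge of a vertex $v$ leads either to a neighbor on its left or on its right, and to scan a left-edge (resp.\ right-edge) the vertex must point left (resp.\ right). Hence for two edges $e,e' \in E(v)$ we have $\alpha(e,e') = \ang{0}$ if they lie on the same side of $v$, and $\alpha(e,e') = \ang{180}$ if they lie on opposite sides. I call $v$ a \emph{both-sided} vertex if it has neighbors on both sides, and let $k$ be their number. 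If $k=0$, every vertex keeps a single fixed heading throughout, no rotation is ever needed, and both objectives are trivially $0$; from now on assume $k \ge 1$.

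For the lower bound I would argue that each both-sided vertex $v$ forces at least $\ang{180}$ of rotation at itself. Since the scan-cover condition $|S(e)-S(e')| \ge \alpha(e,e')$ forbids scanning an opposite pair of edges at $v$ at the same time, the left-edges and right-edges of $v$ occur at distinct times and can be linearly ordered by $S$. As this order contains at least one left-edge and one right-edge, some consecutive pair $(e,e')$ switches sides, so $\alpha(e,e') = \ang{180}$ and $\nu_S(e,e') = 1$. Summing over the both-sided vertices shows the optimum of \MSCtotEnergy is at least $\ang{180}\cdot k$, and taking the maximum shows \MSClocEnergy is at least $\ang{180}$.

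For the matching upper bound I would exhibit a \mincover. Let each both-sided vertex $v_i$ perform one left-to-right rotation whose switch is scheduled \emph{in order of position}: $v_i$ faces left on $[0,200i]$, rotates on $[200i,\,200i+180]$, and faces right afterwards; one-sided vertices keep their unique heading and never rotate. I scan each edge $v_iv_j$ with $i<j$ at time $200i+180$. At that moment the smaller endpoint $v_i$ has finished rotating, hence faces right, while the larger endpoint $v_j$ has not yet started, since $200i+180 < 200(i+1) \le 200j$, hence still faces left; so the two endpoints indeed face each other. The scan cover is valid because for an opposite pair of edges $v_iv_m,\,v_mv_j$ at a both-sided vertex $v_m$ the scan times differ by $200\,(m-i) \ge 200 > \ang{180}$, while same-side pairs have $\alpha = \ang{0}$; these are the routine checks. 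Each both-sided vertex rotates exactly $\ang{180}$ once and every other vertex rotates $\ang{0}$, so this is a \mincover whose \MSCtotEnergy value is $\ang{180}\cdot k$ and \MSClocEnergy value is $\ang{180}$, matching the lower bounds. Since $k$ and the schedule are computed in polynomial time, both problems lie in $P$.

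I expect the main obstacle to be the upper bound: one must switch every both-sided vertex exactly once while keeping all facing constraints globally consistent, and a careless ordering of the switches can create cyclic conflicts that would force some vertex to rotate more than once. The idea that removes this difficulty is that every edge runs from a smaller to a larger position, so scheduling the switches monotonically in position makes ``the left endpoint switches before the right endpoint'' automatically consistent along \emph{every} edge, leaving no cycles to resolve.
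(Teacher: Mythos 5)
Your proof is correct and follows essentially the same approach as the paper: the same lower bound of $\ang{180}$ per both-sided vertex, and the same upper-bound construction in which both-sided vertices switch from facing left to facing right sequentially in order of position (your explicit timestamps just formalize the paper's ``one after the other, from left to right'' schedule). The only difference is cosmetic: you spell out the consecutive-pair argument for the lower bound, which the paper merely asserts.
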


Next we show that for 2D instances of \MSCtotEnergy and \MSClocEnergy, there does not exist an efficient algorithm, unless $P=NP$.
Specifically, we show that \MSCtotEnergy and \MSClocEnergy are NP-hard in 2D, even when the underlying graph $G=(V,E)$ is bipartite. Our proof is based on the observation that if a \mincover exists, any scan cover optimal for \MSCtotEnergy is a \mincover. If additionally all vertices have the same $\Lambda(v)$, any scan cover optimal for \MSClocEnergy is a \mincover. We show finding a \mincover is NP-hard via a reduction from the NP-complete problem \textsc{Monotone Not-all-equal 3-satisfiability} (\textsc{MNAE3SAT})~\cite{lovasz1973coverings,Schaefer78SAT}, defined as follows:
Given a set of Boolean variables $X$ and a set of clauses $\mathcal{C}$ with at most 3 literals from $X$ all of which are not negated, is there a 0/1-assignment to the variables in $X$, such that for each clause in $\mathcal{C}$, not all variables have the same value?

Given an instance $I$ of the \textsc{MNAE3SAT}, we construct an \textsc{MSC} instance $G_I$ (with the same $\Lambda(v)$ for all vertices) that has a \mincover if and only if $I$ has a valid variable assignment.
Recall that in a \mincover, the edges of each vertex are scanned in either clockwise or counter-clockwise order. We encode variable assignment by the rotation direction of the vertices in~$G_I$ in a \mincover.
A variable is encoded by a subgraph that contains a set of vertices that have the same rotation direction in a \mincover, and a clause by a subgraph that contains three vertices that cannot all have the same rotation direction in a \mincover. We connect variables to clauses via wires, which are encoded by a subgraph that contains two vertices that have the same rotation direction in a \mincover.
See Figure~\ref{fig:hardness-construction-example} for an example of the construction.
Refer to Appendix~\ref{appendix:nphardness} for the complete construction and a proof of the following theorem.
\begin{restatable}{theorem}{nphardness}\label{thm:NP-hard-2D}
	\MSCtotEnergy and \MSClocEnergy in 2D  are NP-hard, even for bipartite graphs.
\end{restatable}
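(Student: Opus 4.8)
The plan is to reduce from \textsc{MNAE3SAT}. At every vertex $v$, covering all incident edges forces a total rotation of at least $\Lambda(v)$, so $\sum_v \Lambda(v)$ and $\max_v \Lambda(v)$ are universal lower bounds for \MSCtotEnergy and \MSClocEnergy, respectively, and both are attained exactly by a \mincover. Using the stated observation that, whenever a \mincover exists, every optimum for \MSCtotEnergy is a \mincover---and likewise for \MSClocEnergy once all $\Lambda(v)$ agree---it suffices to build, from an \textsc{MNAE3SAT} instance, a bipartite geometric graph with a common value $\Lambda(v)=\Lambda$ that admits a \mincover if and only if the instance is satisfiable. Deciding whether the optimum meets the polynomially computable lower bound then decides satisfiability, yielding NP-hardness of both objectives at once. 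The bridge to the Boolean world is that in a \mincover each vertex sweeps monotonically across its $\Lambda$-\cone, so its rotation direction (clockwise or counterclockwise) is a single binary choice that fully fixes the order in which its incident edges are scanned; I would use this choice to carry a truth value.

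The geometric core is a compatibility constraint for shared edges: since $S$ is one global function and consecutive scans at a vertex must be separated in time by at least their angular difference, the two monotone scan orders induced at the endpoints $u$ and $v$ of an edge jointly impose precedence constraints on $S$. A \mincover exists precisely when the per-vertex direction choices make the resulting system of difference constraints feasible, i.e.\ its precedence graph acyclic. I would design three gadgets translating this feasibility into \textsc{MNAE3SAT} combinatorics: a \emph{variable} gadget, a cluster of points whose pairwise shared edges force all of them to sweep in the same direction and thus to encode one truth value; a \emph{wire} gadget of two points forced to share a direction, transporting a variable's value to the clauses containing it; and a \emph{clause} gadget of three points whose shared-edge constraints admit a feasible order for every direction assignment \emph{except} the all-equal ones, realizing the not-all-equal condition.

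With these pieces, the reduction places one variable gadget per variable, routes a wire to each occurrence, and attaches one clause gadget per clause. Correctness follows in both directions: a satisfying not-all-equal assignment supplies direction choices that make every gadget's local precedences mutually compatible, hence a globally feasible $S$ and a \mincover; conversely any \mincover assigns each variable gadget a single direction, whose induced truth values must satisfy every clause gadget's not-all-equal constraint. I would then verify the two structural requirements that are baked into the construction---that the graph is bipartite and that all vertices can be realized with a common $\Lambda(v)=\Lambda$---together with polynomial size and efficiently computable coordinates.

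The main obstacle I expect is the simultaneous geometric realization: arranging the incident edges at every point so that the $\Lambda$-\cone has the same prescribed angle $\Lambda$ everywhere, the shared-edge compatibility constraints enforce exactly the intended same-direction and not-all-equal relations and nothing spurious, and the graph stays bipartite, all with polynomially bounded coordinates. In particular, showing that the clause gadget rules out \emph{only} the all-equal direction assignments requires a careful angular analysis of the induced precedence cycle, and guaranteeing that wires propagate a direction faithfully---without injecting extra feasible or infeasible orderings---is the most delicate part of the construction.
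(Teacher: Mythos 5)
Your high-level plan coincides exactly with the paper's proof: reduce from \textsc{MNAE3SAT}, note that $\sum_{v}\Lambda(v)$ and $\max_{v}\Lambda(v)$ are lower bounds attained precisely by a \mincover (so that, once all vertices share the same $\Lambda(v)$, deciding whether the optimum meets the bound decides both \MSCtotEnergy and \MSClocEnergy at once), and encode truth values by the rotation direction of connector vertices, with variable, wire, and clause gadgets enforcing ``equal direction'' and ``not all equal'' constraints. But what you have written is a specification of the reduction, not the reduction itself: you never construct the gadgets, and their existence is the entire substance of the theorem. Concretely, the paper needs (i) a device for prescribing $\Lambda$-\cone{s} --- adding degree-one dummy vertices so that any chosen pair of consecutive edges bounds the $\Lambda$-\cone of a vertex and all vertices obtain the same $\Lambda$ --- which is what makes the first/last scanned edge controllable, because such an edge must bound a minimal $\Lambda$-\cone at \emph{both} endpoints; (ii) an explicit wire fragment (with coordinates) whose three connectors $s,t,u$ provably satisfy ``$u$ and $t$ rotate together, $s$ oppositely'' in every \mincover, together with an explicit scan order witnessing that a \mincover exists; (iii) a clause gadget (an equilateral triangle with midpoint vertices) for which explicit scan orders show that exactly the non-constant direction assignments are feasible; and (iv) a geometric assembly --- chains of fragments with the correct parity, wires of $18$ fragments turning by a prescribed angle in bounded steps, bisector alignment and scaling so that endpoints meet --- preserving bipartiteness and the common $\Lambda$ throughout. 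You flag all of this as ``the main obstacle,'' which is accurate: without it there is no proof.

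Moreover, two of your gadget sketches are problematic as stated. A variable gadget made of ``a cluster of points whose pairwise shared edges force all of them to sweep in the same direction'' cannot be taken literally: three or more pairwise-adjacent points contain a triangle and destroy bipartiteness, and in this geometry a shared edge that bounds both endpoints' $\Lambda$-\cone{s} typically forces the two endpoints to rotate in \emph{opposite} directions --- this opposite-direction correlation is exactly what the paper's basic fragment provides, and same-direction forcing is then obtained indirectly by composing an even number of such constraints along a chain. Likewise, a wire of ``two points forced to share a direction'' hides the real difficulty, namely transporting a direction over an arbitrary distance and angle to a clause gadget while keeping every $\Lambda(v)$ equal. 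So you have identified the right target properties, but the existence of gadgets realizing them --- the actual mathematical work --- is the genuine gap.
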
%
\begin{figure}[htbp]
	\centering
	\includegraphics[page=4]{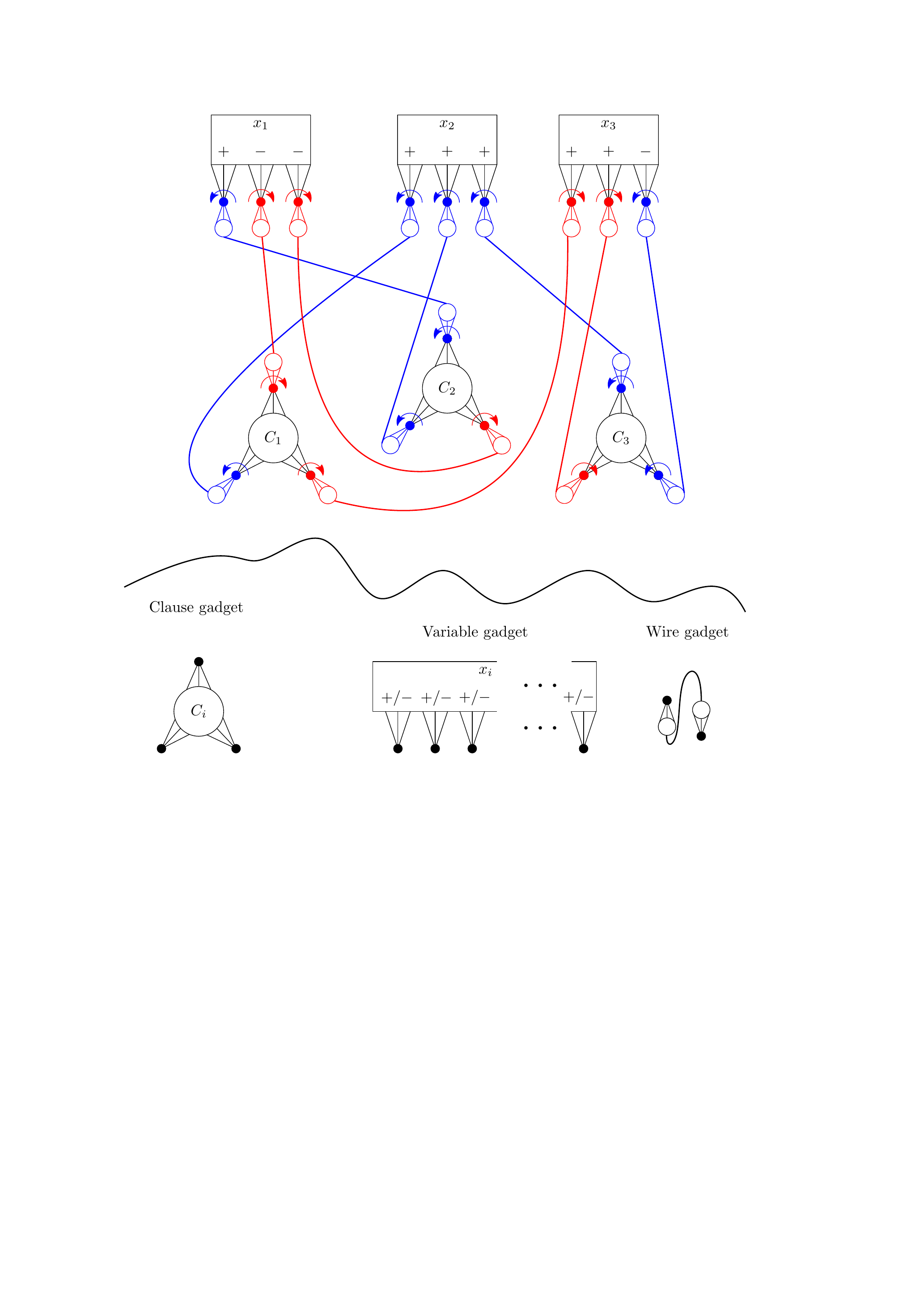}
	\caption{The constructed graph $G_I$ for
  the  instance $(x_1\vee x_2 \vee x_4)\wedge (x_1\vee x_2\vee x_3) \wedge
(x_1\vee x_3\vee x_4)$ of \textsc{MNAE3SAT}. The gadgets are drawn symbolically;
also shown are the directions of the connector vertices corresponding to
the satisfying assignment $x_1=x_3=1,x_2=x_4=0$.}
\label{fig:hardness-construction-example}
\end{figure}
The construction in the proof of Theorem~\ref{thm:NP-hard-2D} establishes a gap
between optimal and suboptimal solutions, which implies a constant-factor
approximation lower bound for \MSClocEnergy.
Refer to Appendix~\ref{appendix:nphardness} for a proof of the following corollary.

\begin{restatable}{corollary}{nphardcor}\label{cor:approximtion-hardness-2D}
\MSClocEnergy in 2D is NP-hard to approximate within a factor of $1.04$, even for bipartite graphs.
\end{restatable}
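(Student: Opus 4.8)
The plan is to read an approximation-hardness gap directly off the reduction behind Theorem~\ref{thm:NP-hard-2D}. That construction yields a bipartite instance $G_I$ in which every vertex shares a common \cone angle $\Lambda(v)=\Lambda$, and $G_I$ admits a \mincover exactly when the \textsc{MNAE3SAT} instance $I$ is satisfiable. Since a \mincover has every vertex sweep its $\Lambda$-\cone once, in the satisfiable (YES) case the optimal \MSClocEnergy value equals $\Lambda$. The whole argument then hinges on proving a strictly larger lower bound in the unsatisfiable (NO) case; the bottleneck objective is exactly what makes this work, because a single ``stuck'' vertex suffices to raise the maximum, whereas its effect would be diluted in the sum defining \MSCtotEnergy.

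First I would analyze the NO case. If $I$ is unsatisfiable then no \mincover exists, so in every scan cover at least one vertex $v$ scans its incident edges in a non-monotone angular order. Such a vertex must reverse its rotation direction, leaving and re-entering its \cone, and hence rotates by strictly more than $\Lambda(v)=\Lambda$. The crucial quantitative step is to bound this excess from below: using the explicit edge directions at the critical clause-gadget vertex, I would compute the minimum detour forced on any non-monotone scan order and obtain $\mathrm{OPT}_{\mathrm{NO}}\ge\beta$ for an explicit $\beta>\Lambda$ fixed by the gadget geometry.

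Combining the cases gives a gap: every scan cover separates YES instances (value $\Lambda$) from NO instances (value at least $\beta$). Hence any approximation algorithm with ratio $c<\beta/\Lambda$ could decide \textsc{MNAE3SAT} in polynomial time by thresholding its output at, say, $(\beta+\Lambda)/2$. Tuning the gadget angles so that $\beta/\Lambda\ge 1.04$ then rules out such an algorithm unless $P=NP$, with bipartiteness inherited from $G_I$.

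The main obstacle is the geometric estimate of $\beta$: certifying that a vertex which cannot sweep its \cone monotonically is forced to rotate by at least $1.04\,\Lambda$. This demands a careful case analysis of the angular layout at the clause gadget and of the shortest detour over all non-monotone scan orders of that vertex; the remaining ingredients---the YES-value, the equivalence, and the reduction to a decision test---are inherited from Theorem~\ref{thm:NP-hard-2D} or are routine.
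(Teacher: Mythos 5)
Your high-level framework coincides with the paper's: reuse $G_I$ from \cref{thm:NP-hard-2D}, observe that satisfiable instances have bottleneck value exactly $\Lambda$, lower-bound the value of unsatisfiable instances, and turn the gap into approximation hardness. However, the step you yourself call ``the main obstacle'' is left open, and your plan for it would fail. In a NO instance the \mincover property fails \emph{somewhere}, but the scan cover gets to choose where: the violating vertex can lie in a variable gadget, a wire gadget, or a clause gadget, and an adversarial cover will place the violation wherever the forced detour is cheapest. Hence analyzing only ``the critical clause-gadget vertex'' cannot yield a valid lower bound $\beta$; you must take the minimum forced excess over \emph{all} vertices of the construction. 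This is precisely what the paper does, via the global quantities $\theta_{\max}=\ang{360}-\Lambda$ and $\theta_{\min}$ (the smallest angle between incident edges both of whose endpoints have degree $>1$), and the binding vertex turns out to be $v_5$ of the \emph{wire fragment}, not a clause vertex: $\theta_{\max}=\arctan(1/2)$, $\theta_{\min}\geq\arctan(1/4)$, giving the ratio $\bigl(\ang{360}-\arctan(1/2)+\arctan(1/4)\bigr)/\bigl(\ang{360}-\arctan(1/2)\bigr)\approx 1.042$.

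Two further points. First, your case analysis of how a vertex fails is incomplete: you assert that a failing vertex ``must reverse its rotation direction,'' but a vertex can also rotate monotonically in a single direction while wrapping through its outer \cone (scanning its edges in a cyclically shifted order); it then rotates by up to $\ang{360}-\varepsilon$ without ever reversing. The paper covers both cases with the bound $\min(\ang{360}-\varepsilon,\ \ang{360}-\theta_{\max}+\theta_{\min})$; the reversal case happens to be the binding one, but the wrap-around case must be ruled out as cheaper (the paper controls $\varepsilon$ by adding edges). Second, ``tuning the gadget angles so that $\beta/\Lambda\geq 1.04$'' is not a free move: the geometry is fixed by the reduction of \cref{thm:NP-hard-2D}, and any change would require re-proving \cref{lem:swap-properties,lem:variable-gadget,lem:clause-gadget,lem:wire-gadget}. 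The only design freedom the paper exploits for this corollary is already baked into the construction (the wire gadget uses five parts so that the angles $\ang{90}-\theta/5$ stay large); the $1.04$ bound is then \emph{computed} from the fixed construction, not tuned afterwards.
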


Next, we complement the  4.5-approximation algorithm for \MSClength in bipartite graphs in the plane~\cite{DBLP:conf/compgeom/FeketeKK20}
by presenting an approximation algorithms for both remaining objectives.

\begin{theorem}\label{thm:approxiBip}
There exists a 2-approximation algorithm for \MSClocEnergy and \MSCtotEnergy for each bipartite graph $G=(V_1\cupdot  V_2, E)$ embedded in the plane.
\end{theorem}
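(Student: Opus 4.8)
The plan is to exhibit a single, explicitly defined scan cover that is simultaneously within a factor $2$ of optimal for both objectives. The starting point is a per-vertex lower bound. Any heading that must face all edges of a vertex $v$ over time has to sweep the entire $\Lambda$-\cone of $v$, so the rotation it performs, and hence the energy charged to $v$ in \emph{any} scan cover, is at least $\Lambda(v)$. Consequently the optimum for \MSCtotEnergy is at least $\sum_v \Lambda(v)$ and the optimum for \MSClocEnergy is at least $\max_v \Lambda(v)$. (This matches the remark that a \mincover, which spends exactly $\Lambda(v)$ at each $v$, is optimal whenever it exists.) It therefore suffices to construct a feasible scan cover in which every vertex $v$ has total rotation at most $2\Lambda(v)$: this immediately gives objective values $\sum_v 2\Lambda(v)$ and $\max_v 2\Lambda(v)$, each at most twice the respective optimum.

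To build such a cover I would use bipartiteness to orient every edge from $V_1$ to $V_2$, and set $S(uv)$ equal to the angle in $[\ang{0},\ang{360})$ of the direction pointing from the $V_1$-endpoint $u$ towards $v$. Feasibility is checked directly. For two edges sharing a vertex, the absolute difference of their $S$-values is the linear difference of two directions on the circle, which is always at least the corresponding smaller angle $\alpha$, since $\alpha=\min(\Delta,\ang{360}-\Delta)\le\Delta$. At a $V_2$-vertex $v$ the two $S$-values are measured from the far endpoints, but reversing both incident directions by $\ang{180}$ leaves every pairwise angle unchanged, so the same inequality applies to the angles seen at $v$. Thus $S$ is a valid scan cover for any embedding, and it is computable in linear time.

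The main work, and the step I expect to be the real obstacle, is bounding the energy at each vertex by $2\Lambda(v)$. The key observation is that at any vertex $w$ the order in which $S$ scans the edges of $E(w)$ is exactly the cyclic angular order of those edges, cut open at a fixed seam: at the direction $\ang{0}$ for $V_1$-vertices, and at $\ang{180}$ for $V_2$-vertices, because of the $\ang{180}$ reversal. If the seam falls in the empty complement of the $\Lambda$-\cone, the order is monotone and the energy is exactly $\Lambda(w)$. The delicate case is when the seam falls \emph{inside} the cone: then the edges split into two runs, each monotone along the cone, and the vertex appears to traverse its cone twice. Here I must argue that the two partial sweeps \emph{share} the width of the cone rather than each costing a full $\Lambda(w)$ — concretely, the spans of the two runs partition the cone and so sum to at most $\Lambda(w)$, while the single junction angle between the two runs is a smaller angle between two directions inside the cone, hence also at most $\Lambda(w)$ (using $\alpha\le\ang{180}\le\Lambda(w)$ to cover the case $\Lambda(w)\ge\ang{180}$). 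Adding these gives at most $2\Lambda(w)$.

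Combining the two ingredients finishes the proof: every vertex $w$ rotates at most $2\Lambda(w)$, so $S$ has \MSCtotEnergy value at most $\sum_w 2\Lambda(w)\le 2\sum_w\Lambda(w)$ and \MSClocEnergy value at most $\max_w 2\Lambda(w)=2\max_w\Lambda(w)$, each within a factor $2$ of optimal by the lower bounds above. I would present the wrap-around energy estimate as a small self-contained lemma, since it is the only nontrivial geometric claim; feasibility and the lower bounds are routine once the orientation trick is in place.
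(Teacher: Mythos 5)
Your proposal is correct and follows essentially the same route as the paper: your scan cover (each edge scanned at the angle of its direction from the $V_1$-endpoint, with the $\ang{180}$ reversal at $V_2$-vertices) is exactly the paper's synchronized sweep from opposite initial headings $\ang{0}$/$\ang{180}$, and your lower bounds $\sum_v \Lambda(v)$ and $\max_v \Lambda(v)$ are the same. Your seam-inside/seam-outside analysis bounding each vertex's energy by $2\Lambda(v)$ matches the paper's case analysis (a)--(c), merely merging its cases (b) and (c) into one unified estimate (two monotone runs whose spans sum to at most $\Lambda(v)$, plus one junction angle of at most $\Lambda(v)$).
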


\begin{proof}
Defining $V:=V_1\cup V_2$, the values $\max_{v\in V}\Lambda(v)$ and $\sum_{v\in V}\Lambda(v)$ are clearly lower bounds on the value of a scan cover minimizing \MSClocEnergy and \MSCtotEnergy, respectively.
	
  We use the following geometric property based on alternating angles that is also used
in~\cite{DBLP:conf/compgeom/FeketeKK20}): Starting with opposite headings, two
vertices face their edge at the same time when both start a full clockwise
rotation simultaneously. Defining start headings
$r(v,0):=\ang{0}$ for $v\in V_1$ and  $r(v,0):=\ang{180}$ for $v\in V_2$,  the
clockwise rotation scheme induces a scan cover~$S$ by defining the scan time
$S(e)$ of edge $e$ as  the time when its two vertices face each other.

We now show that in the rotation scheme $r'$ induced by $S$, i.e., every vertex $v$ starts to head towards its edge  first scanned in $S$ and then follows the order on $E(v)$ defined by $S$, the total rotation angle of each vertex $v$ is at most $2\Lambda(v)$. To this end, we consider three types of vertices; for an illustration consider \cref{fig:Approxi}.

	\begin{figure}[htb]
	\centering
	\begin{subfigure}[t]{.3\textwidth}
		\centering
		\includegraphics[page=4]{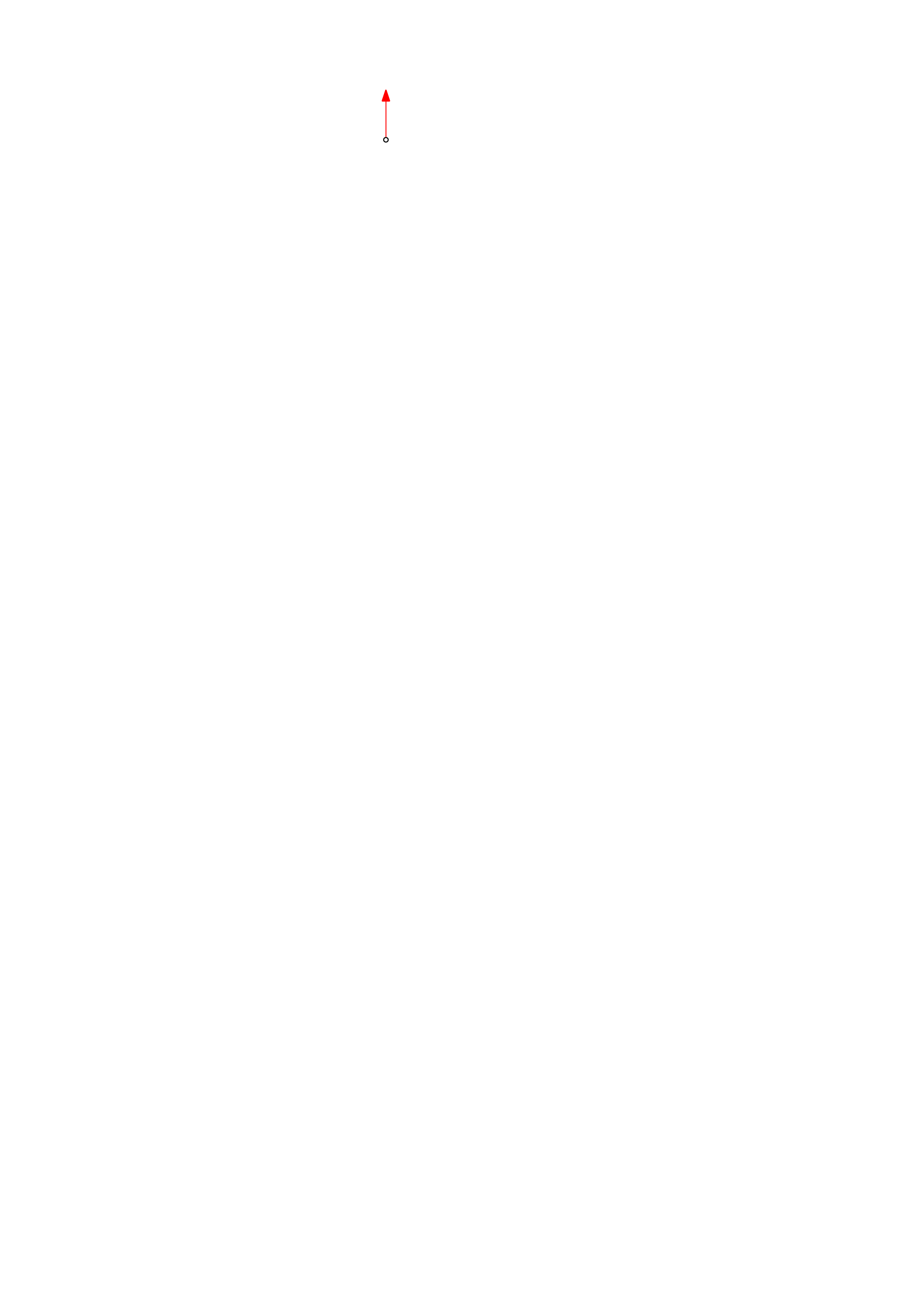}
		\caption{Case (a).
		}
		\label{fig:ApproxiA}
	\end{subfigure}\hfil
	\begin{subfigure}[t]{.3\textwidth}
		\centering
		\includegraphics[page=3]{approxi}
    \caption{Case (b).
}
		\label{fig:ApproxiB}
	\end{subfigure}\hfil
	\begin{subfigure}[t]{.3\textwidth}
		\centering
		\includegraphics[page=2]{approxi}
    \caption{Case (c).
    }
		\label{fig:ApproxiC}
	\end{subfigure}
	\caption{Illustration for the rotation scheme $r'$ in the proof of \cref{thm:approxiBip}.}
	\label{fig:Approxi}
\end{figure}

	\begin{alphaenumerate}
	\item \label{item:a} Case: $r(v,0)$ lies outside the $\Lambda$-\cone of $v$. Then all edges of $v$ are scanned by a clockwise rotation, one after the other. Hence, $v$
	 has a total rotation angle of $\Lambda(v)$.
 \item Case: $r(v,0)$ lies inside the $\Lambda$-\cone of $v$ and $\Lambda(v)\geq \ang{180}$.
 Going over all edges clockwise takes at most a full rotation of $\ang{360}\leq 2\Lambda(v)$.
 \item Case: $r(v,0)$ lies inside the $\Lambda$-\cone of $v$ and $\Lambda(v)<\ang{180}$.  Let $e_1$ and $e_2$ denote the bounding edges of the $\Lambda$-\cone  such that $S(e_1)\leq S(e_2)$.
   By definition, the minimal angle of $e_1$ and $e_2$ is $\Lambda(v)<\ang{180}$. Splitting the $\Lambda$-\cone   of $v$ into two halves at $r(v,0)$, $v$ scans the edges in each half  in clockwise direction, rotating an angle of $\Lambda(v)$ counterclockwise between $e_1$ and $e_2$.
	It follows that the total rotation angle of $v$ is at most $2\Lambda(v)$.
\end{alphaenumerate}

As the total rotation angle is at most $2\Lambda(v)$ for each vertex $v$, \MSClocEnergy and \MSCtotEnergy are upper bounded by $2\cdot \max_{v\in V} \Lambda(v)$ and $2\cdot \sum_{v\in V}\Lambda(v)$. Together with the lower bounds provided above, this shows that this scan cover is a 2-approximation for \emph{either} objective. 
\end{proof}

\begin{corollary}\label{cor:optimal}
Let $G=(V_1\cupdot  V_2, E)$ be a bipartite graph embedded in the plane such that the points of $V_1$ and $V_2$ can be separated by a line. Then an optimal \MSClocEnergy and \MSCtotEnergy of $G$ can be found in polynomial time.
\end{corollary}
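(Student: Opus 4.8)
The plan is to reduce the corollary to the existence of a \mincover. Recall from the proof of \cref{thm:approxiBip} that $\max_{v\in V}\Lambda(v)$ and $\sum_{v\in V}\Lambda(v)$ are lower bounds for \MSClocEnergy and \MSCtotEnergy, respectively, and that a \mincover rotates every vertex $v$ monotonically by exactly $\Lambda(v)$ and therefore attains \emph{both} bounds simultaneously. Hence it suffices to show that, under the separability assumption, a \mincover always exists and can be constructed in polynomial time; any such scan cover is then optimal for both objectives at once.

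The construction mimics the one in \cref{thm:approxiBip}, but exploits separability to force every vertex into its favorable Case~(a). First I would rotate the coordinate frame so that the separating line $\ell$ is vertical, with $V_1$ strictly to its right and $V_2$ strictly to its left. Because $G$ is bipartite, every edge joins $V_1$ to $V_2$; strict separation then implies that every edge incident to a vertex $v\in V_1$ points strictly to the left, while every edge incident to a vertex $v\in V_2$ points strictly to the right. Consequently the heading $\ang{0}$ lies in the outer \cone of every $v\in V_1$, and the heading $\ang{180}$ lies in the outer \cone of every $v\in V_2$ --- which is exactly the start-heading assignment $r(v,0)$ used in the proof of \cref{thm:approxiBip}.

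Running the simultaneous clockwise rotation from these start headings, the alternating-angle property invoked in \cref{thm:approxiBip} guarantees that the two endpoints of every edge face each other at a common time, so a valid scan cover $S$ is induced. Since every start heading now lies outside the corresponding $\Lambda$-\cone, \emph{every} vertex falls into Case~(a): it scans all its incident edges during a single monotone clockwise sweep of its $\Lambda$-\cone and thus rotates by exactly $\Lambda(v)$. Therefore $S$ is a \mincover, and by the lower bounds above it is simultaneously optimal for \MSClocEnergy and \MSCtotEnergy.

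Finally, all ingredients --- computing each $\Lambda(v)$, determining the separating direction, and reading off the scan times along the clockwise sweep --- are elementary and run in polynomial time. The one step deserving care is the geometric claim that separability forces all edges incident to any single vertex into a common open half-plane of directions, which is precisely what lets a single pair of opposite headings sit outside \emph{all} $\Lambda$-\cones simultaneously. This is the crux of the argument and the exact point where the separating-line hypothesis is indispensable: without it, some start heading would fall inside a $\Lambda$-\cone, pushing that vertex into Case~(b) or~(c), where only the weaker factor-$2$ bound survives.
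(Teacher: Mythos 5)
Your proposal is correct and follows essentially the same route as the paper's own proof: apply the construction from \cref{thm:approxiBip} with the separating line made vertical, observe that separability places every start heading outside the corresponding $\Lambda$-\cone so that every vertex falls into Case~(a) and rotates exactly $\Lambda(v)$, and conclude optimality for both objectives from the lower bounds $\max_{v\in V}\Lambda(v)$ and $\sum_{v\in V}\Lambda(v)$. The paper states this more tersely, while you additionally spell out the geometric reason that all edges incident to a vertex point across the separating line; this is a fine (and welcome) elaboration, not a different argument.
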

\begin{proof}
	We follow the same technique as in the proof of \cref{thm:approxiBip}.
	We may assume without loss of generality that the separating line is vertical and that the points of $V_1$ lie left of the line. Then, with the above definitions, every vertex is in case (a), i.e., the total rotation angle for each vertex $v$ is $\Lambda(v)$. Consequently, the resulting scan cover is optimal for both \MSClocEnergy and \MSCtotEnergy.
\end{proof}

The insights from \cref{thm:approxiBip} yield an approximation algorithm for $k$-colored graphs.
\begin{corollary}\label{cor:specialCases}
  For  \MSCtotEnergy and \MSClocEnergy of $k$-colored graphs embedded in 2D, there exists an $O\left(\log\left(k\right)\right)$-approximation. 
\end{corollary}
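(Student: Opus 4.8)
The plan is to reduce the $k$-colored case to the bipartite case of \cref{thm:approxiBip} by decomposing $G$ into $O(\log k)$ bipartite subgraphs, solving each with the 2-approximation, and then concatenating the resulting schedules in time.

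First I would fix a proper $k$-coloring $c\colon V\to\{0,\dots,k-1\}$ and set $m=\lceil\log_2 k\rceil$. For each bit position $j\in\{1,\dots,m\}$, I split $V$ according to the $j$-th bit of $c(v)$; the edges crossing this cut form a bipartite subgraph. Since the endpoints of every edge receive distinct colors, they differ in at least one bit, so assigning each edge to the lowest such bit yields an edge partition $E=E_1\cupdot\dots\cupdot E_m$ into $m$ bipartite embedded subgraphs $G_1,\dots,G_m$. Applying \cref{thm:approxiBip} to each $G_j$ gives a rotation scheme and scan cover $S_j$ in which every vertex $v$ rotates by at most $2\Lambda_j(v)$, where $\Lambda_j(v)\le\Lambda(v)$ denotes the cone angle of $v$ restricted to its edges in $G_j$.

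Next I would glue these $m$ schedules into a single scan cover $S$ of $G$ by running $S_1$, then $S_2$, and so on, separated by idle time windows long enough that all remaining adjacency constraints $|S(e)-S(e')|\ge\alpha(e,e')$ between edges of different phases hold; as the objectives measure rotation rather than makespan, the lengths of these windows are free. The total rotation of a vertex $v$ is then the sum of its rotations within the phases plus the rotations spent repositioning $v$ between consecutive phases. The within-phase part is at most $\sum_j 2\Lambda_j(v)\le 2m\,\Lambda(v)=O(\log k)\,\Lambda(v)$, so the crux, and the main obstacle, is to bound the repositioning cost, which a naive analysis charges at up to $\ang{180}$ per phase boundary and therefore cannot compare against $\Lambda(v)$ when $\Lambda(v)$ is small.

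To control the transitions I would distinguish two cases per vertex. If $\Lambda(v)<\ang{180}$, the $\Lambda$-\cone is a genuine cone and the efficient per-vertex schedule of \cref{thm:approxiBip} (cases (a) and (c)) keeps the heading of $v$ inside its $\Lambda_j$-\cone, which is contained in its $\Lambda$-\cone; hence both the ending heading of phase $j$ and the starting heading of phase $j+1$ lie in an arc of width $\Lambda(v)$, so each repositioning costs at most $\Lambda(v)$ and the total over all boundaries is at most $(m-1)\Lambda(v)$. If $\Lambda(v)\ge\ang{180}$, I would instead let $v$ perform a full $\ang{360}$ rotation in every phase, which still realizes $S_j$ and is permissible since $\ang{360}\le 2\Lambda(v)$; a full rotation returns $v$ to its starting heading, so no repositioning is needed. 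In both cases the total rotation of $v$ is $O(\log k)\,\Lambda(v)$. Finally I would combine this with the lower bounds $\sum_v\Lambda(v)$ and $\max_v\Lambda(v)$ on the optimal values of \MSCtotEnergy and \MSClocEnergy established in \cref{thm:approxiBip}, to conclude that the sum and the maximum of the realized rotations are within an $O(\log k)$ factor of optimal for the respective objective, which yields the claimed approximation.
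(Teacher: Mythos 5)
Your proposal is correct and follows essentially the same route as the paper: decompose the $k$-colored graph into $\lceil\log_2 k\rceil$ bipartite subgraphs via the bits of the color classes (exactly the covering of Motwani and Naor~\cite{Motwani:1994:EAC:891890} that the paper cites), run the 2-approximation of \cref{thm:approxiBip} on each, concatenate the schedules, and compare against the lower bounds $\sum_{v}\Lambda(v)$ and $\max_{v}\Lambda(v)$. If anything, your handling of the transitions is \emph{more} careful than the paper's, which simply asserts that each of the $\lceil\log_2 k\rceil-1$ transition phases costs at most $OPT(G)$; your per-vertex argument (headings remain inside the $\Lambda$-\cone when $\Lambda(v)<\ang{180}$, so each repositioning costs at most $\Lambda(v)$) is precisely the justification that assertion implicitly needs. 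One small slip: in the case $\Lambda(v)\ge\ang{180}$, a vertex may lie on different sides of the bipartition in consecutive phases, so a full rotation does not return it to the \emph{next} phase's start heading; this is harmless, since any repositioning costs at most $\ang{180}\le\Lambda(v)$, and your $O(\log k)\,\Lambda(v)$ per-vertex bound---and hence the claimed approximation for both \MSCtotEnergy and \MSClocEnergy---still holds.
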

\begin{proof}
    The edges of a $k$-colored graph $G$ can be covered by $\lceil \log_2(k)\rceil$ bipartite graphs~$G_i$~\cite{Motwani:1994:EAC:891890}.
  For each $G_i$, we use the 2-approximation of \cref{thm:approxiBip}.
  Clearly, for both objectives, the optimal scan time of $G$ is lower  bounded by the optimal scan time of every subgraph. Consequently, scanning all $G_i$ takes at most $\sum_i 2\cdot OPT(G_i)\leq 2\lceil \log_2(k)\rceil\cdot OPT(G)$, where $OPT$ denotes the optimum scan time for the respective objective.
  For adjusting the headings between the scan covers of the bipartite graphs, we need  $(\lceil \log_2(k)\rceil - 1)$ transition phases each of which needs at most $OPT(G)$. Hence, the total scan time is upper bounded by $(3\lceil \log_2(k)\rceil - 1)\cdot OPT(G)$
\end{proof}

\section{Experiments}
\label{sec:practical}

For our experimental evaluation, we considered two types of benchmark instances in 2D, which we call \emph{random} and \emph{celestial}.
Random instances are generated by placing $n$ points chosen uniformly at random from the unit square, with each edge chosen with probability $p$.
Note that the visible area of a satellite constellation on the same altitude in low Earth orbit is fairly close to a set of co-planar points and hence the square (or plane) serves as a reasonable approximation.

Celestial instances are inspired by real-world instances of satellites in a shared orbit,
in which they maintain their relative positions while orbiting around a central body like Earth, as long as no
explicit orbit-changing maneuvers are carried out.
They are characterized by a set of points on a circle and a central circular obstacle.
The points on the circle are chosen uniformly at random; an edge exists if and only if its
vertices \emph{see} each other, i.e., the edge does not intersect the central obstacle.
Examples and the distribution of the nearly \num{2000} instances with up to \num{800} edges used for
our experiments can be seen in \cref{fig:preliminaries:instanceexamples}.

All experiments were run on Intel Core i7-3770 with \SI{3.4}{\GHz} and \SI{32}{\giga\byte} of RAM.
\begin{figure}[htb]
  \centering
  \begin{subfigure}[t]{.25\textwidth}
    \centering
    \includegraphics[width=0.9\textwidth]{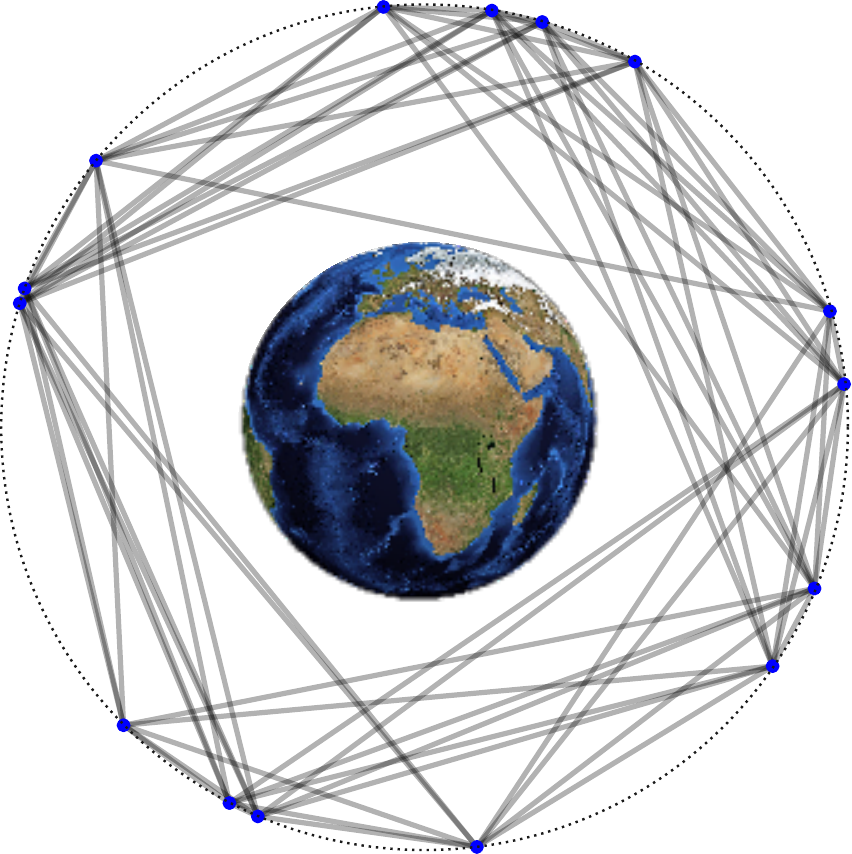}
    \caption{A celestial instance.}
    \label{fig:preliminaries:instanceexamplesB}
  \end{subfigure}\hfill
  \begin{subfigure}[t]{.45\textwidth}
    \centering
    \includegraphics[width=\textwidth]{./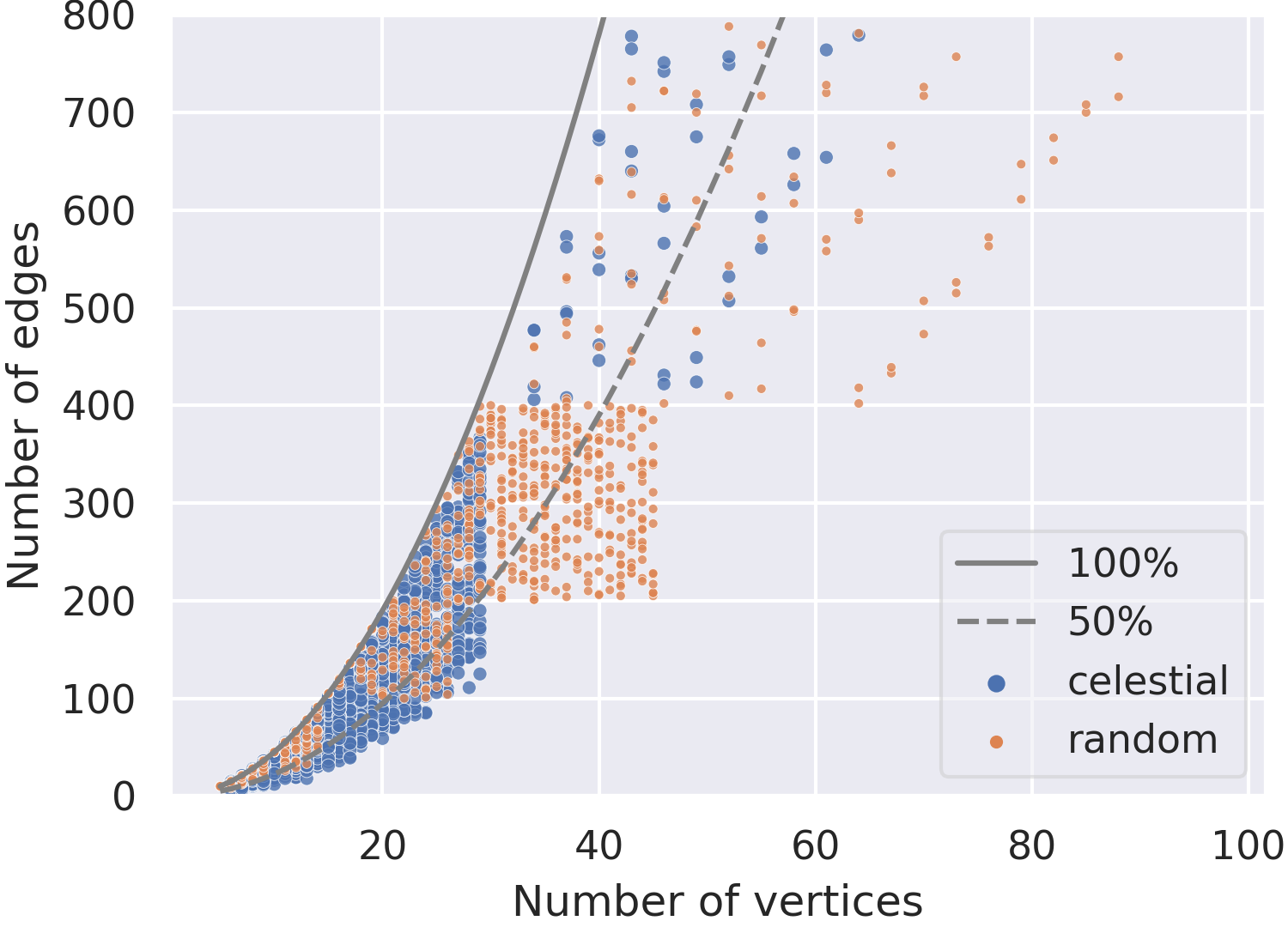}
    \caption{Instance distribution.}
    \label{fig:preliminaries:instanceexamplesC}
  \end{subfigure}\hfil
  \begin{subfigure}[t]{.25\textwidth}
    \centering
    \includegraphics[width=0.95\textwidth]{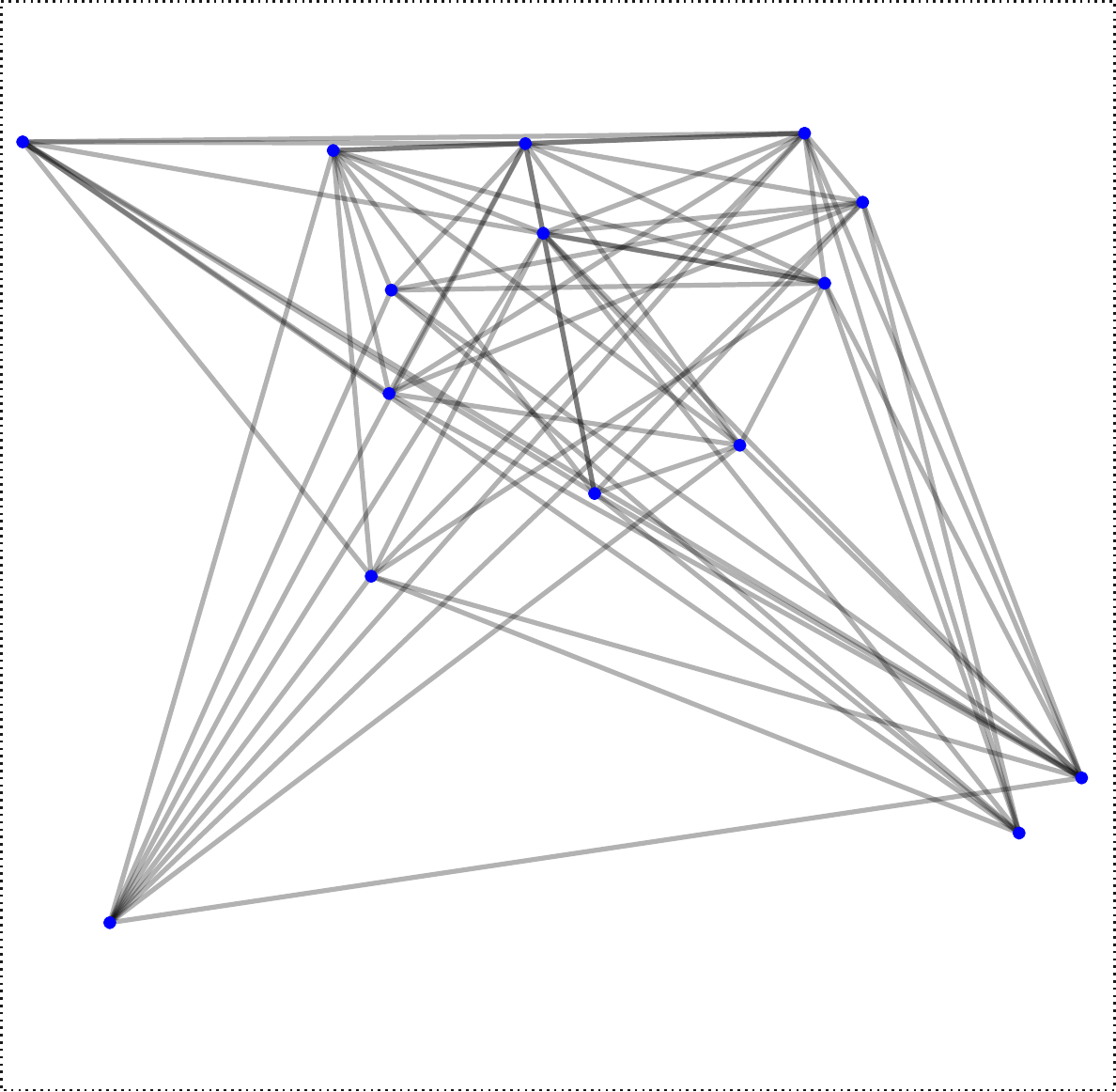}
    \caption{A random instance.}
    \label{fig:preliminaries:instanceexamplesA}
  \end{subfigure}\hfill

  \caption{Examples of instances with 15 vertices and $\sim$ 70 edges: (a) celestial, (c) random, and (b) the instance distribution. Auxiliary lines in (b) indicate graphs with edge densities $50\%$ (dashed) and $100\%$ (solid) of complete graphs.}
  \label{fig:preliminaries:instanceexamples}
\end{figure}

\subsection{Exact Algorithms}
\label{sec:practical:exact}

\newcommand{\constfunctionsymbol}{\alpha}
\newcommand{\costfunction}[1]{\constfunctionsymbol(#1)}

We developed three mixed integer programs (MIPs) and two constraint programs (CPs) to solve instances to provable optimality.
Note that not every program solves all three problems.
An experimental evaluation is given at the end of this section.
While we focus on 2-dimensional geometric instances, all formulations are applicable to all metric cost functions.

\subsubsection{Mixed Integer Program 1 (\MSClength, \MSCtotEnergy, \MSClocEnergy)}
\label{sec:exact:ip1}
Our first MIP, denoted by \MIPone, uses two types of variables.
The first type  are real variables $t_e\geq 0$ for all $e\in E$.
The second type are Boolean variables $x_{(e,e')}$ for all ordered edge pairs $(e, e')\in E^2$. In a computed solution, the variables $t_e$ define a scan cover in which $S(e):=t_e$ and the value of $x_{(e,e')}$ corresponds to $\nu_S(e,e')$.
Because $\nu_S(e, e')=0$ if $|e\cap e'|\not = 1$, we directly set $x_{(e,e')}:=0$ in these cases.
Consequently, the objective functions can be expressed by substitution of $S(e)$ with $t_e$ and $\nu_S(e,e')$ with $x_{(e,e')}$.
Note that a $\min$-$\max$ objective can be implemented by a single additional real variable and one additional constraint for each term in the objective.

We introduce a set of constraints to guarantee that the $t$-variables and the $x$-variables arise from a valid scan cover.
Because the angle function $\alpha$ fulfills the triangle inequality, it suffices to ensure the time difference of the $t$-variables for all $x_{(e,e')}=1$.
We know that $M_1:=\log_2 n \cdot \ang{360}$ is an upper bound on the minimal makespan for a graph $G$ in 2D with $n$ vertices~\cite{DBLP:conf/compgeom/FeketeKK20}.
Moreover, a makespan of $M_2:=|E|\cdot\ang{180}$ allows to scan each edge individually, and thus an optimal scan cover of \MSClocEnergy and \MSCtotEnergy can be realized in this makespan.
Therefore, by inserting the correct $M_i$, we can enforce feasible scan times by using the Big-M method.
\begin{equation}
  \label{eq:mip1:mtz}
  \forall v \in V, \forall (e,e')\in E(v)\times E(v), e\not=e' \colon  \quad t_{e'} \geq t_{e} + \costfunction{e,e'}-(1-x_{(e,e')})\cdot M_i.
\end{equation}
This leaves us with ensuring that the $x$-variables correspond to a feasible scan cover.
First, for every vertex $v$, an incident scanned edge $e$ has at most one predecessor edge and one successor edge in the scan order.
\begin{equation}
  \label{eq:ip1:hamil1}
  \forall v\in V , e \in E(v)\colon \sum_{e'\in E(v), e'\not=e}x_{(e,e')}\leq 1\quad \text{and}\quad  \sum_{e'\in E(v), e'\not=e}x_{(e', e)}\leq 1
\end{equation}
Second, the total number of scanned edges at vertex $v$ is $|E(v)|$, i.e., the number of consecutively scanned edge pairs, is $ |E(v)|-1$.
\begin{equation}
  \label{eq:ip1:hamil2}
  \forall v \in V \colon  \sum_{e,e'\in E(v)\times E(v), e\not=e'} x_{(e,e')} = |E(v)|-1
\end{equation}
Together, \cref{eq:ip1:hamil1,eq:ip1:hamil2} enforce that every vertex has exactly one first and one last scanned edge in the induced scan order.
Because \cref{eq:mip1:mtz} enforces that the scan times obey the rotation times, there are no cycles in the sequence defined by $x$ if all angles are positive.
This fact is very similar to the Miller-Tucker-Zemlin formulation of the TSP~\cite{miller1960integer}.
In the presence of $\ang{0}$-angles, we dynamically add the following constraint similar to the Dantzig formulation~\cite{dantzig1954solution} to separate these cycles.
\begin{equation}
  \label{eq:ip1:dantzig}
  \forall v\in V,\forall S\subsetneq E(v), S\not=\emptyset \colon  \quad \sum_{e \in S, e'\in E(v)\setminus S} x_{(e,e')}+x_{(e',e)} \geq 1
\end{equation}

\subsubsection{Mixed Integer Program 2 (\MSClength)}\label{sec:exact:ip2}

The abstract definition of the MSC~\cite{DBLP:conf/compgeom/FeketeKK20} can be directly implemented as a MIP, because absolute values can be implemented using a Boolean variable.
Some modern solvers like Gurobi actually provide this functionality directly.
Like for \MIPone (\cref{sec:exact:ip1}), we have a real-valued variable $t_e\geq 0$ for each $e \in E$ that states its scan time.
We try to keep the maximum value assigned to any $t_{e}, e\in E$ as low as possible.
For every two incident edges $vw$ and $vu$, we only have the constraint that $t_{vw}$ and $t_{vu}$ have to be at least the time apart that $v$ needs to rotate between these two.
This results in the following \MIPtwo.
\begin{align}
  &\min & \max_{e \in E} t_e&& \\
  &\text{s.t. } &|t_{vw}-t_{vu}|&\geq \costfunction{vu,vw} \quad& \forall vw,vu \in E\\
  & &t_{e} &\geq 0 \quad &\forall e\in E
\end{align}

The main difference to \MIPone is that we do not keep a record of the actually performed rotations.
As a consequence, \MIPtwo can only be used for \MSClength.
However, on the positive side, we do not need to dynamically add additional cycle constraints.

\subsubsection{Mixed Integer Program 3 (\MSCtotEnergy, \MSClocEnergy)}
\label{sec:exact:ip3}

The third MIP
(defined by \cref{eq:ip1:hamil1,eq:ip1:hamil2,eq:ip1:dantzig,eq:ip3:globaldirectedcycle}),
denoted by \MIPthree, is a variant of \MIPone~(\cref{sec:exact:ip1}) in which
the $t$-variables and the corresponding Big-M based constraint (\cref{eq:mip1:mtz}) are removed.
As a consequence,  we may use it  for \MSClocEnergy and \MSCtotEnergy, as they only
need the $x$-variables.

It is possible that the scan orders at the individual vertices are cycle free,
but that the overall schedule has a deadlock when the vertices wait for each
other, see \cref{fig:G8-2-OrderAndDepA,fig:G8-2-OrderAndDepB}.
\begin{figure}[b]
  \centering
  \begin{subfigure}[t]{.3\textwidth}
    \centering
    \includegraphics[scale=1]{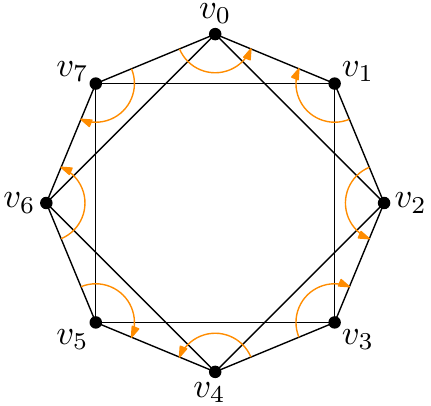}
    \caption{Rotation scheme}
    \label{fig:G8-2-OrderAndDepA}
  \end{subfigure}
  \hfil
  \begin{subfigure}[t]{.3\textwidth}
    \centering
    \includegraphics[scale=1,page=3]{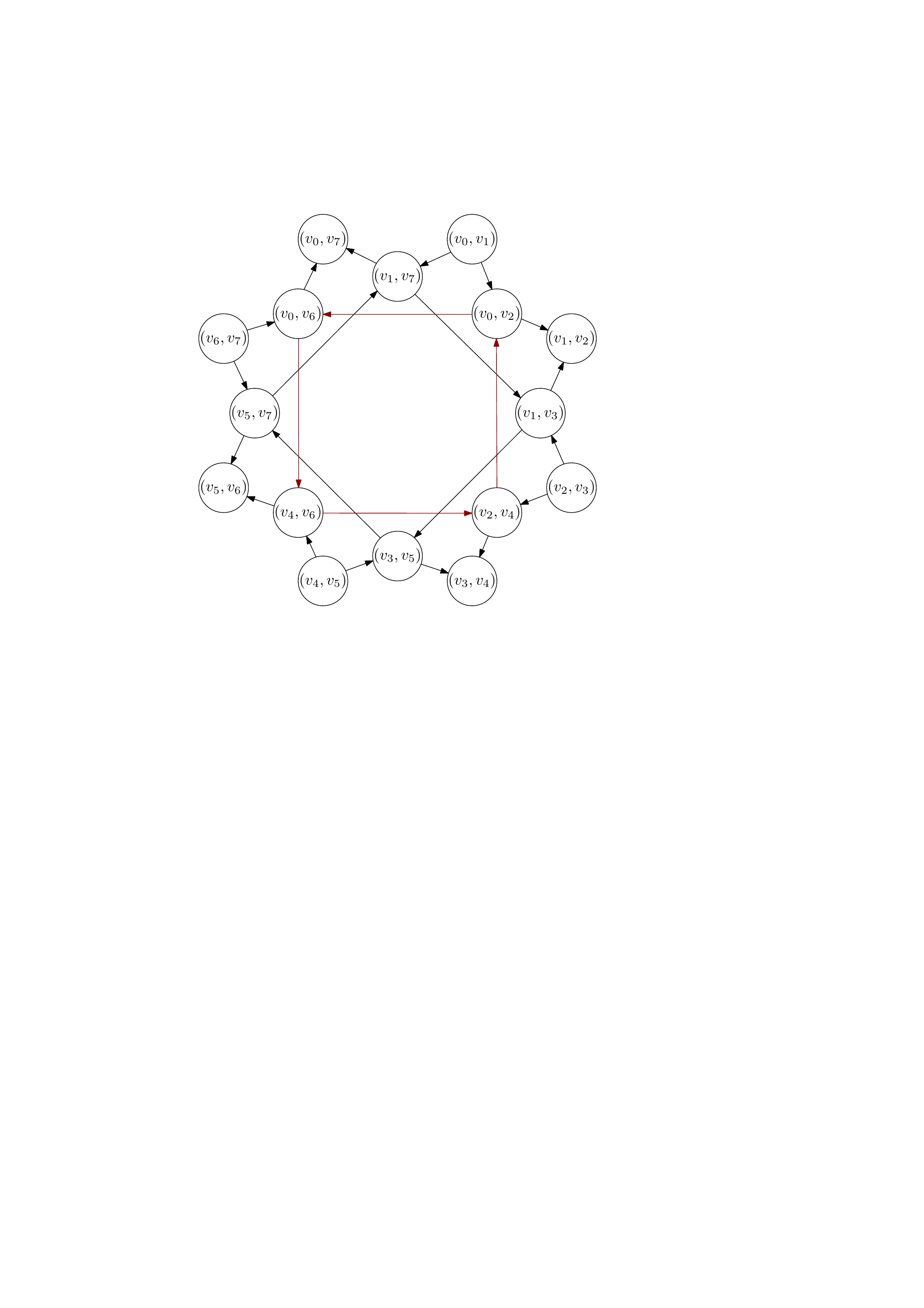}
    \caption{Cycle in scan order}
    \label{fig:G8-2-OrderAndDepB}
  \end{subfigure}

  \caption{A globally infeasible edge order fulfilling  \cref{eq:ip1:hamil1,eq:ip1:hamil2}, i.e., it is cycle-free at each vertex:
  (a) its rotations scheme
  (b) the resulting edge order that contains a cycle. An arc $(e,e')$ in this graph corresponds to an $x_{(e,e')}=1$.}
\end{figure}
We therefore prohibit directed cycles in the scan order defined by the $x$-variables (if not already
separated by \cref{eq:ip1:dantzig}) dynamically via callbacks for every newly found integral solution.
Violated constraints can be found via a simple DFS search.
\begin{equation}
  \label{eq:ip3:globaldirectedcycle}
  \forall k\in \mathbb{N}_{|V|}, \forall (e_0,e_1, \ldots, e_{k-1})\in E^k \colon \quad x_{(e_{k-1}, e_0)}+ \sum_{i = 0, 1,\ldots k-2} x_{(e_i, e_{i+1})} \leq k-1
\end{equation}
Note that these cycles can also happen in \MIPone, but only with zero rotation costs between the involved edges.
Thus, they are irrelevant for the solution, as all of these edges can be scanned at once.

\subsubsection{Constraint Program 1 (\MSClength)}

Our first constraint program (denoted by \CPone) has the same formulation as \MIPtwo.
The only difference between the CP version and the MIP version lies in the employed solver.
In particular, absolute values can be modeled directly.

\subsubsection{Constraint Program 2 (\MSCtotEnergy, \MSClocEnergy)}

Our second constraint program (defined by \cref{eq:ip1:hamil1,eq:ip1:hamil2,eq:exact:cp2:mtz}), denoted by \CPtwo,
is similar to \MIPthree described in \cref{sec:exact:ip3}.
However, \MIPthree adds \cref{eq:ip1:dantzig,eq:ip3:globaldirectedcycle} dynamically, which our CP does not support.
Because adding all these constraints directly results in a prohibitively large formulation,
we instead use a conditional variant of the Miller-Tucker-Zemlin~\cite{miller1960integer} formulation to eliminate cycles in the scan order.
Different from MIPs, we do not need the Big M method for CPs, but can implement conditional constraints directly.
More precisely, we add the variables $o_e\in \mathbb{N}_{|E|}, e\in E$ that state the cycle-free scan order of the edges, which is enforced by the constraints
\begin{equation}
  \label{eq:exact:cp2:mtz}
  \forall (e,e')\in E\times E \colon \qquad o_{e'}-o_e \geq 1 \quad \text{ if } x_{(e, e')}=1.
\end{equation}

\subsubsection{Experimental Evaluation of Exact Algorithms}
\label{sec:exact:experiments}

We used \emph{Gurobi}~(v9.0.1) for solving the MIPs and \emph{CP-SAT} of Google's \emph{or-tools} (v7.7.7810) for solving the CPs.
CP-SAT, which is based on a SAT solver, requires all coefficients and variables to be integral for computational efficiency.
We therefore convert the floating point values to integral values including the first eight floating point digits (rounded, decimal).
While this weakens the accuracy, we calculated a theoretical maximal deviation of less than $\SI{1e-4}{\percent}$, which we consider negligible and comparable to the accuracy of the MIP solver.

We considered all solvers for the three objectives on the two instance types described in the preliminaries.
We evaluated how many instances of which size could still be solved to provable optimality within a time limit of \SI{900}{\sec};
see \cref{fig:engineering:exact:percentage}.
For \MSClength, \CPone has a clear lead, solving \SI{50}{\percent} of the
instances with $242\pm 5\%$ edges for random instances, and $125\pm 5\%$ edges for celestial instances.
In our experiments, neither MIPs was able to solve any instance with more than \num{70} edges to provable optimality.
For \MSCtotEnergy, \MIPone and \MIPthree performed better than \CPtwo, but all solvers could barely solve instances with more than \num{30} edges.
While \MIPone has a more direct objective without auxiliary constraints and variables as needed for \MSClength, its actual performance was slightly worse.
For \MSClocEnergy, \CPtwo performed considerably better; for celestial instances, it can solve instances nearly twice as large
($\geq \SI{50}{\percent}$ at $48\pm 5\%$ edges) than the MIPs.
Surprisingly, \MIPone was slightly better than \CPtwo for random instances, being able to solve \SI{50}{\percent}
of the instances with $61\pm 5\%$ edges.
Overall, CPs appear to be considerably more effective than MIPs, and random instances show to be easier to solve than celestial ones.

\begin{figure}[htb]
  \newcommand{\h}{5cm}
  \centering
  \begin{subfigure}[t]{.9\textwidth}
    \centering
    \includegraphics[height=\h]{./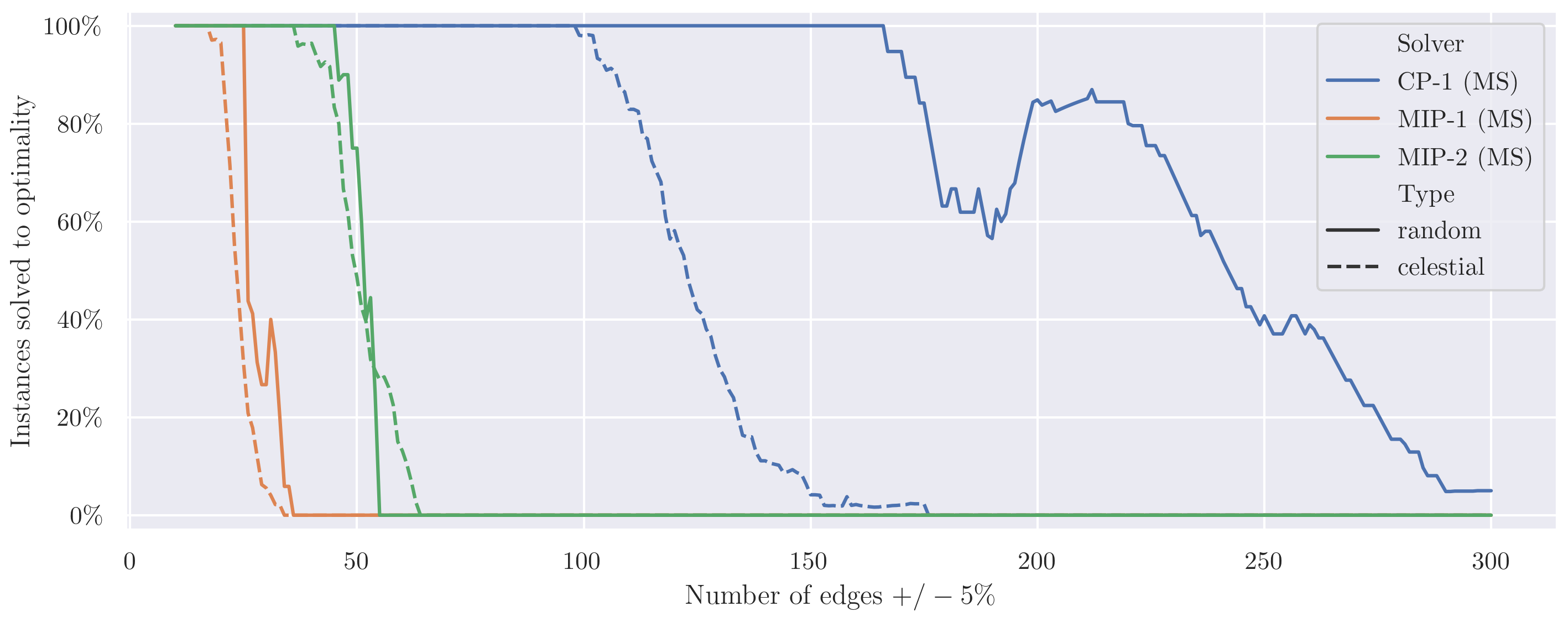}
    \caption{Makespan}
  \end{subfigure}
  \hfill
  
  \begin{subfigure}[t]{.45\textwidth}
    \centering
    \includegraphics[height=\h]{./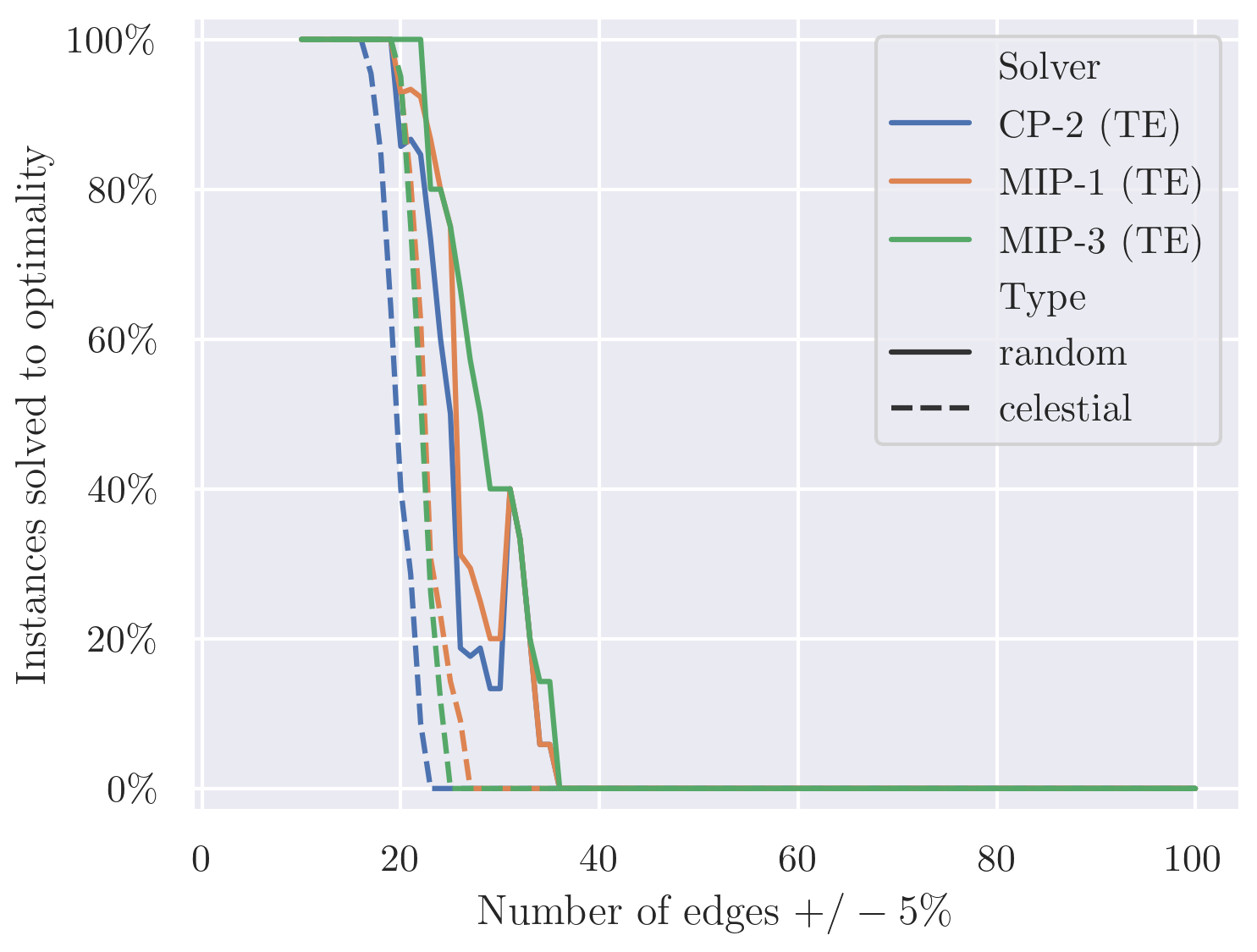}
    \caption{Total Energy}
  \end{subfigure}
  ~ 
  \begin{subfigure}[t]{.45\textwidth}
    \centering
    \includegraphics[height=\h]{./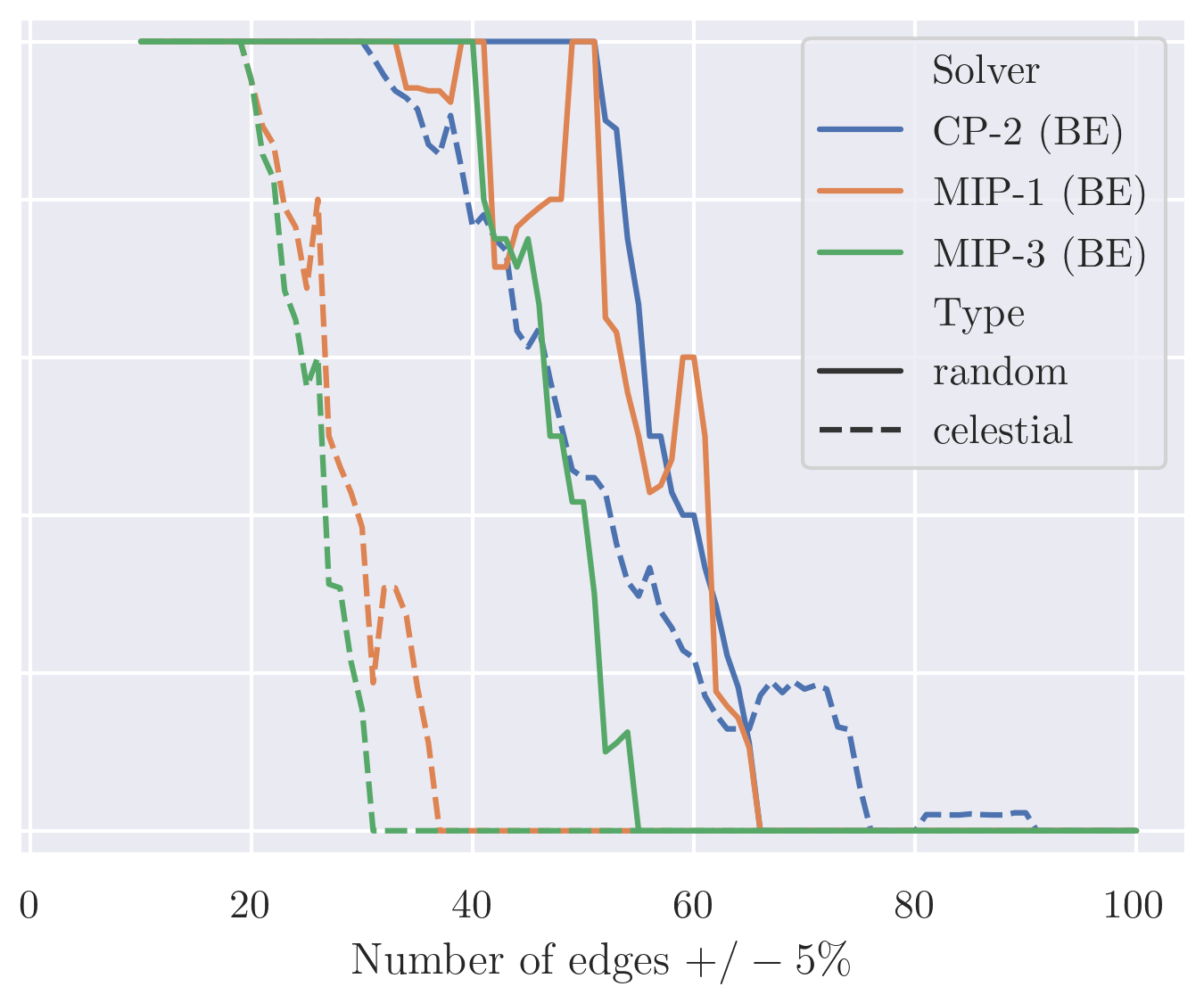}
    \caption{Bottleneck Energy}
  \end{subfigure}
  \hfill

  \caption{Performance of the exact solver measured in how many instances with $m\pm 5\%$ edges can be solved to provable optimality within \SI{900}{\sec}.
  The bump for \CPone starting at \num{200} can be explained by the instance distribution that at this point includes more instances with lower degree.
  }
  \label{fig:engineering:exact:percentage}
\end{figure}

\subsection{Approximations and Heuristics}
\label{sec:practical:heuristic}

For larger instances (beyond the size that was solvable to provable optimality), we developed additional
methods based on approximation algorithms and heuristics that provided good (but not provably optimal) solutions.

\subsubsection{Bipartite Approximation Algorithms with Coloring Partition}

The constant-factor approximation algorithms for bipartite graphs extend to general graphs by partitioning them into bipartite graphs and applying  the corresponding approximation algorithm to each of the bipartite subgraphs.
	More specifically, 
	assigning a vector over $\{0,1\}$ with $\lceil \log_2 k\rceil$ bits to each color class of a $k$-colored graph induces a covering of its edge set with $\lceil \log_2 k \rceil$ bipartite graphs; for more details see  Motwani and Naor~\cite{Motwani:1994:EAC:891890}.
For \MSClength, this even preserves the approximation factor~\cite{DBLP:conf/compgeom/FeketeKK20}.
We use the well-engineered \emph{dsatur} heuristic~\cite{brelaz1979new} for the graph coloring problem, which is shipped with
the \emph{pyclustering}-package~\cite{Novikov2019}.
Concatenating the solutions of the bipartite graphs yields a feasible scan cover; here we use a greedy approach to minimize the transition costs.
We denote this method by APX.

\subsubsection{(Meta-)Heuristics}
We also considered a number of (meta-)heuristics for optimizing the three objectives.
\begin{description}
  \item[Greedy:] Scan the first edge regarding a given or random order and then scan the edge that increases the objective the least, until all edges are scanned.
    If multiple edges are equally good, the first one regarding the order is selected. Many edges can be inserted without extra cost and thus
    the initial edge order has a strong influence on the result.
  \item[Iterated Local Search (ILS):] This simple but potentially slow heuristic considers for a given start
    solution (in this case of \emph{Greedy}) all possible swaps of edges; the locally best swap is carried out, until no further improvement is possible.
  \item[Simulated Annealing (SA):] This common variation of \emph{Iterated
    Local Search} performs swaps according to a probability based on the Boltzmann
    function $\textbf{Boltzmann}(T, s_1, s_2)=e^{\nicefrac{1}{T}\cdot(s_2-s_1)}$, where $s_1$ is
    the objective value of the current best solution, $s_2$ is the objective value
    of the considered solution, and $T\in \mathbb{R}^+$ is the current \emph{temperature}.
    The temperature decreases over time and with it the likelihood of a worse
    solution being used.
    If the objective does not improve for some time, the temperature is increased in order to escape the local minimum.
    Due to randomization, we can run multiple searches in parallel. We terminate if the solution has not improved for some time.
  \item[Genetic Algorithm (GA):] We start with an initial population of \num{200} solutions generated by a randomized Greedy.
    A solution is encoded by assigning each edge a fractional number between $0.0$ and $1.0$, similar to~\cite{gholami2009scheduling}.
    The scan order is determined by sorting the edges by these numbers.
    In each round, we build a new population by selecting the best $10\%$ of
    the old population (\emph{elitism}) and then fill the rest of the population by
    crossovers of the old generation.  For a crossover, we select two solutions of
    the old generation with a probability matching their objective values
    (\emph{uniform selection}) and for each edge we choose with equal probability
    either the number from the first or second solution (\emph{uniform crossover}).
    If by chance, two edges get the same number, we randomly change one of them without influencing the order.
    Of the new generation of solutions, $3\%$ are selected for mutation.
    A mutation applies Greedy with a probability of $60\%$ (the old order is used as initial edge order) or changes
    each edge with a $3\%$ probability to a new random number.
    This is repeated until we either reach a time limit of $\SI{900}{\sec}$, \num{300} generations, or \num{60} generations without improvement.
    The best solution found during this process is then returned.
\end{description}

\subsubsection{Experimental Evaluation of Approximations and Heuristics}

\cref{fig:engineering:inexact:overview} shows experimental results for heuristically solving instances with up to \num{800} edges with a \SI{900}{\sec} time limit (at which point the current solution is returned).
For \MSClength, \CPone yields the best results even for larger instances (where it is aborted by the time limit) by a margin of \SIrange{25}{50}{\percent} to the next best algorithm, GA\@.
For \MSCtotEnergy, the genetic algorithm turned out to be the best approach for celestial instances by a margin
of over \SI{50}{\percent} for the larger instances.
Surprisingly, \CPone (optimizing for \MSClength) yields slightly better solutions than the genetic algorithm for random instances of \MSCtotEnergy.
The most interesting results are for \MSClocEnergy.
Here, \CPone achieves the best results by a margin of over \SI{20}{\percent} for random instances, and GA (TE) the best results for celestial instances by a margin of over \SI{40}{\percent}.
The excellent performance of \CPone can be explained by a strong correlation of
\MSClength and \MSClocEnergy for random graphs, as shown in
\cref{fig:engineering:correlations}.  The fact that GA (TE) is actually better
in optimizing \MSClocEnergy than GA (BE) can be explained by the weaker
gradients of bottleneck objectives, because only a small part of the solution
(the most expensive vertex) actually contributes to the value.
However, the initial bump, at which the exact solver of \MSClocEnergy still yields
(better) solutions, indicates that these solutions could be far from optimal
and that there may still be room for improvement.

\begin{figure}[htb!]
  \newcommand{\h}{5.5cm}
  \centering
  \begin{subfigure}[t]{.96\textwidth}
    \includegraphics[height=\h]{./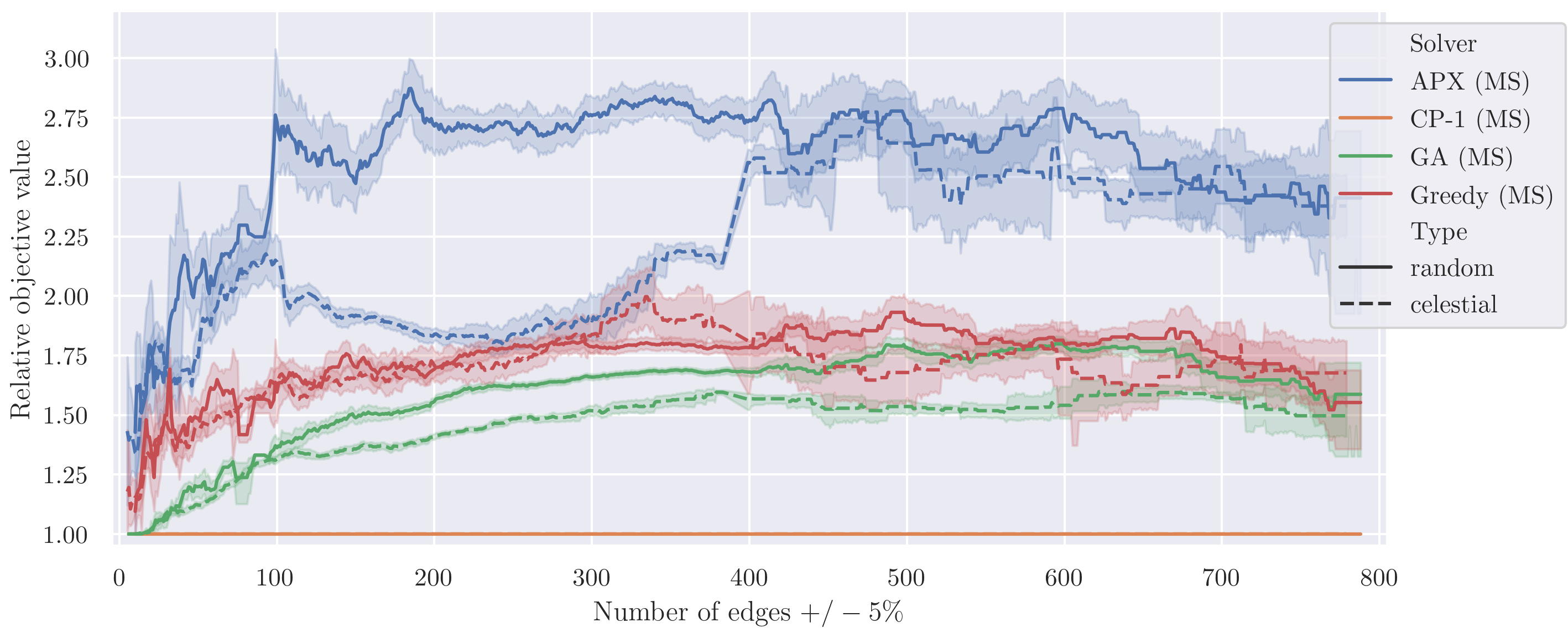}
    \caption{Makespan.}
  \end{subfigure}
  \hfill
  
  \begin{subfigure}[t]{.45\textwidth}
    \includegraphics[height=\h]{./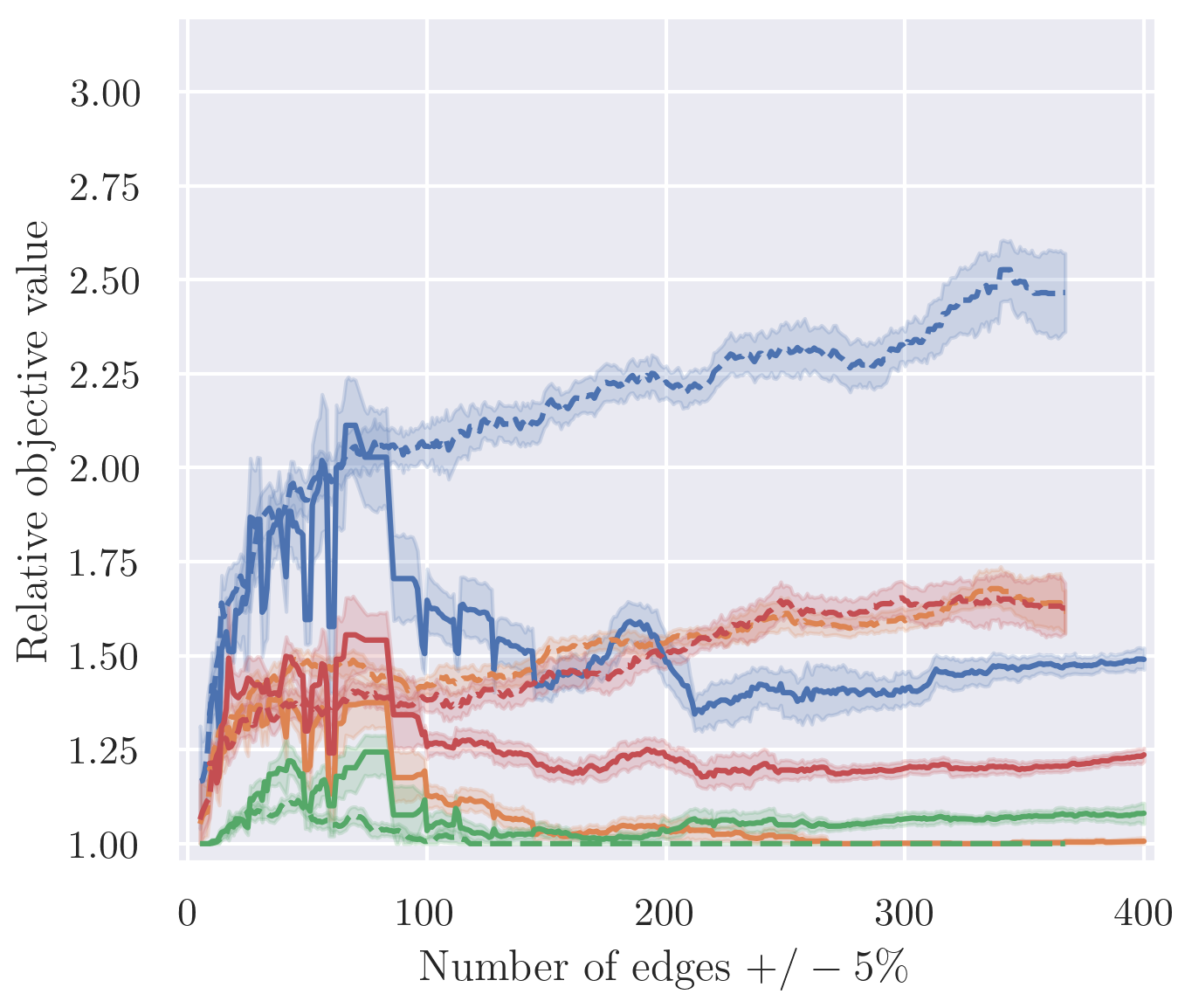}
    \caption{Total energy.}
  \end{subfigure}
  \hfil
  \begin{subfigure}[t]{.52\textwidth}
    \centering
    \includegraphics[height=\h]{./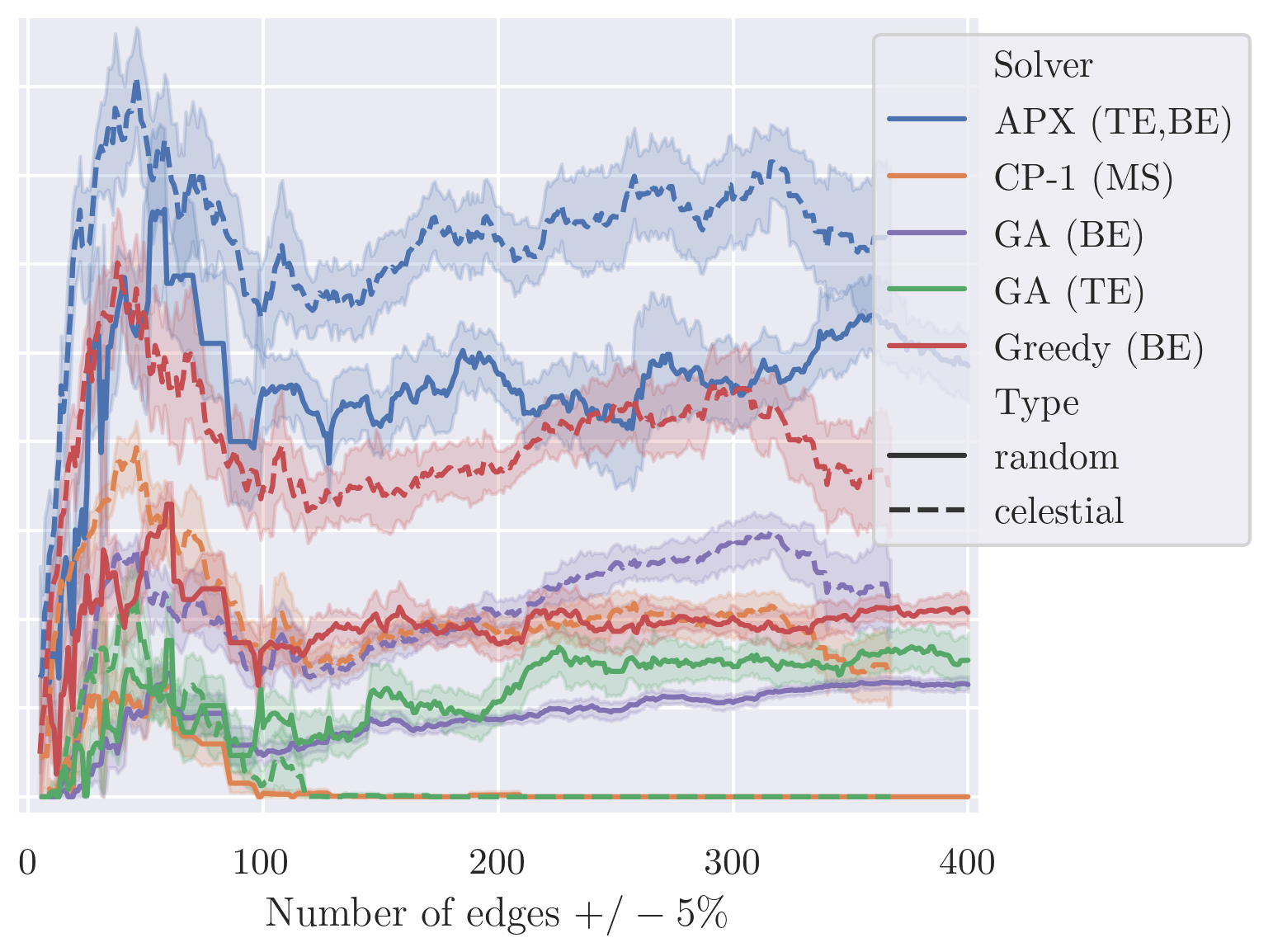}
    \caption{Bottleneck energy.}
  \end{subfigure}

  \caption{%
    Relative performance of the non-exact methods, measured by the obtained objective value divided by the best known value.
    We used the same instances for the exact solver, so the better denominator creates a small bump for smaller instance sizes, 
    in particular for \MSClocEnergy.
    Except for \CPone, the exact solvers did not yield good solutions for larger instances, if any at all, and are thus excluded for readability.
    The plots show the mean and the corresponding \SI{95}{\percent} confidence interval.
    We highlight the difference between the two instance types by using different styles for the lines.
    Note that because these are relative values, a comparison of the performance over the different objectives is not possible.
    ILS and SA are excluded for readability and perform only slightly better than Greedy.}
    \label{fig:engineering:inexact:overview}
\end{figure}

Overall, either \CPone or GA (TE) yields the best solutions.
\CPone is especially strong on random instances for all three objectives.
The approximation algorithm is usually among the worst.
For \MSClength, the algorithm performs a full rotation for nearly all instances, as $\max_{v\in V} \Lambda(v)$ is usually above \ang{180}.
Note that the factor can be worse than the approximation factor $4.5$ (resp.\ $2$), because these are not bipartite graphs.

In \cref{fig:engineering:correlations}~(first row, fourth and last column) we can additionally see that for \MSClength the objective correlates strongly with the number of edges for celestial instances and with the average degree for random instances.
Total energy seems to primarily correlate with the number of edges for both types; our random instances are on average twice as expensive.
For \MSClocEnergy, only random graphs seem to have a significant correlation to \MSClength and the average degree.

\begin{figure}[htb!]
  \centering
  \includegraphics[width=1.1\textwidth]{./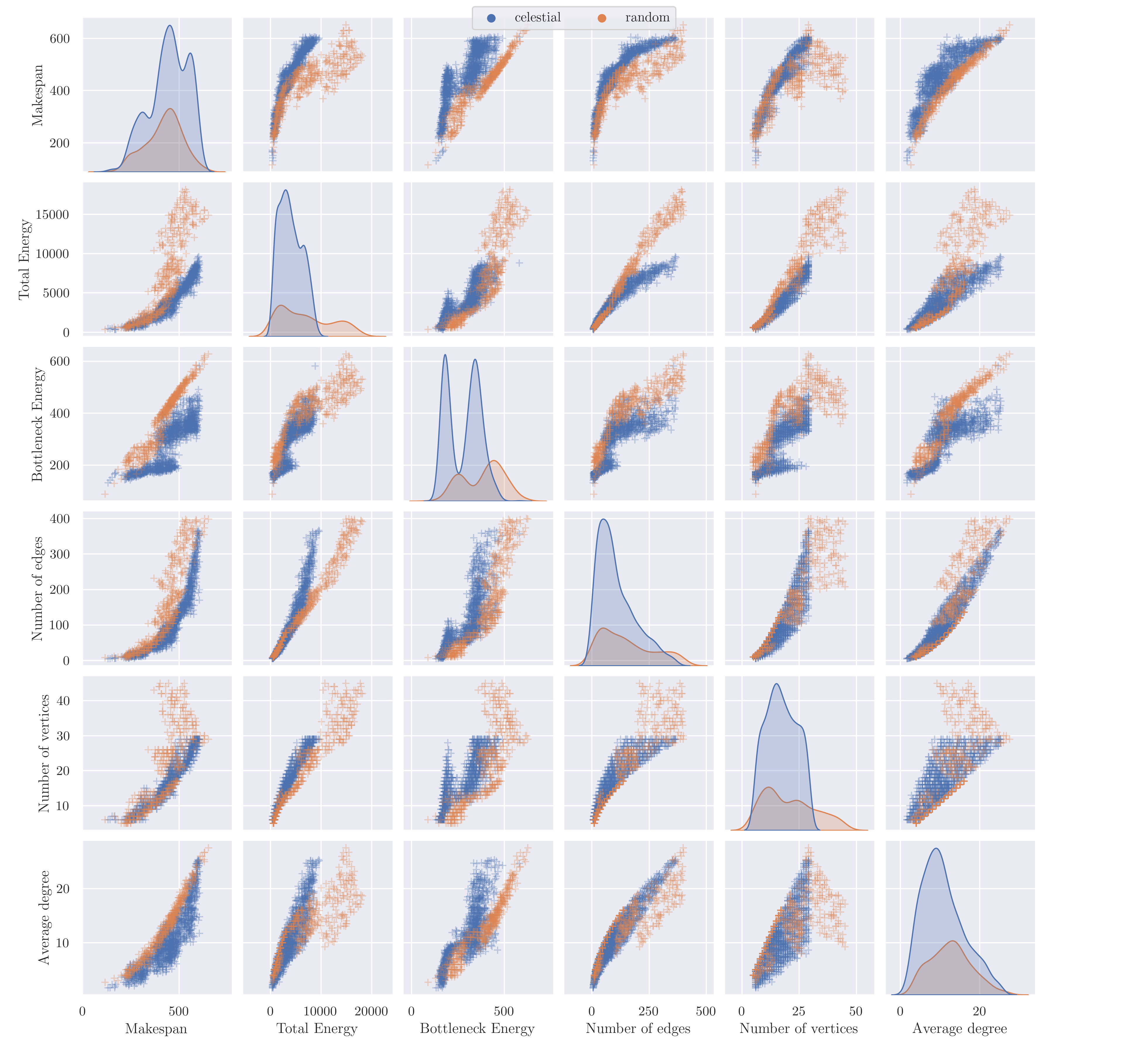}
  \caption{%
    Correlation and distribution of the best known objectives and instance properties.
    The diagonal shows the density distribution of the x-values.
    The scatter plots have a point for every existing value pair, which allows to detect correlations.
    }
    \label{fig:engineering:correlations}
\end{figure}

\section{Conclusion and Open Problems}

We studied problems of minimum scan cover with three different practically relevant objective functions,
providing both theoretical and practical contributions:
complexity and algorithmic results for the new objectives (\MSCtotEnergy and \MSClocEnergy),
and practical methods for computing provably optimal solutions for smaller and near-optimal solutions for larger instances.

In particular, we developed multiple MIP and CP formulations and demonstrated that instances of \MSClength can be solved reliably 
for instances with more than \num{100} edges using constraint programming which performs much better than our MIP approaches.
While this approach generalizes also to 3D, we only tested 2D instances; it is open whether these results also carry over to 3D.
\MSCtotEnergy and \MSClocEnergy can only be solved to optimality for much smaller instances.
For solving larger instances without guarantee of optimality, we evaluated approximation algorithms and a spectrum of meta-heuristics.
Within the given time limit, \CPone provided the best solutions for all \MSClength instances,
and even the random instances for \MSCtotEnergy and \MSClocEnergy, despite only optimizing for \MSClength.
For celestial instances of \MSCtotEnergy and \MSClocEnergy, the genetic algorithm optimizing for \MSCtotEnergy provides the best solutions.
However, the results indicate perspectives for improving the optimization of \MSClocEnergy. 

At this point,
fully dynamic instances (in which the vertices change their relative positions to each other over time, such 
as for satellites with different orbit parameters) are yet to be explored. These promise to be even more challenging,
due to bigger gaps between optimal and suboptimal solutions, resulting from possibly long delays when a limited communication window
has been missed.



\bibliography{biblio_min.bib}

\appendix
\section{\MSCtotEnergy and \MSClocEnergy in 1D}
\label{appendix:1d}

In this section we present a complete proof of the following theorem.

\onedsimple*
\begin{proof}
	We assume that the vertices are placed on a horizontal line.
	We partition the vertices into two groups: those with a neighbor to only one side and those with neighbors to both sides.
	If the second group is empty, there is a trivial zero-cost solution for both objectives.

	Thus, consider $k>0$.
	Each of these vertices needs to rotate at least $\ang{180}$, so the
values  $ \ang{180}\cdot k$ and $\ang{180}$ are lower bounds for \MSCtotEnergy
and \MSClocEnergy, respectively.  The following strategy matches this lower
bound:
	The vertices in the first group are headed to their neighbor and do not rotate. In the following we restrict our attention to the vertices in the second group.
	In the beginning, all of them are headed left.
	Then, from left to right, one after the other rotates such that it is headed right; the next vertex starts only after the completion of its predecessor. Note that whenever a vertex rotates, all edges to its left are scanned. Consequently, this yields a valid scan cover.
\end{proof}

\section{NP-hardness of \MSClocEnergy and \MSCtotEnergy} 
\label{appendix:nphardness}
Given an instance $I$ of \textsc{MNAE3SAT}, we construct a graph $G_I$ with the same $\Lambda(v)$ for all vertices that has a \mincover if and only if $I$ is satisfiable.
Recall that the \textsc{Minimum Scan Cover} problem with respect to the objectives \MSClocEnergy and \MSCtotEnergy on $G_I$ is equivalent to finding a \mincover of this graph.
The choice of variable assignment is encoded by the choice of rotation direction in a \mincover of specific vertices in $G_I$, which we call \emph{connector vertices}.

\subparagraph{Constructing the gadgets}
We construct a \textit{variable gadget} $G_x$ (\cref{fig:variable-gadget}) for
each variable $x \in X$ and a \textit{clause gadget} $G_C$
(\cref{fig:clause-gadget}) for each clause $C \in \mathcal{C}$. For  $x \in C$,
we connect the gadgets $G_x$ and $G_C$ with a  \textit{wire gadget} $G_w$
(\cref{fig:wire-gadget}). The resulting graph $G_I$ is symbolically shown in
\cref{fig:hardness-construction-example}. We construct both the variable gadget
and wire gadget from smaller components called \textit{wire fragments} $G_f$,
see \cref{fig:swap-fragment} for an illustration.

\bigskip

We first state several observations that help with the construction of these gadgets.

\begin{observation}\label{obs:min-scan-first-last}
	If an edge $vw$ is the first or the last edge scanned in a \mincover, it bounds a minimal $\Lambda$-\cone of vertex $v$ and of vertex $w$.
\end{observation}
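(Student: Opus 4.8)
The plan is to exploit the rigidity that a \mincover imposes on the rotation of a single vertex. First I would record what happens at one vertex $v$: by definition of a \mincover, $v$ rotates in a single direction with total rotation angle exactly $\Lambda(v)$. A minimal $\Lambda$-\cone of $v$ is an angular interval of width exactly $\Lambda(v)$ that contains all edge directions of $E(v)$, with its two bounding edges lying on its boundary. Since the monotone heading of $v$ must face every edge of $E(v)$ while sweeping a total angle of only $\Lambda(v)$, the arc traced by this heading has length $\Lambda(v)$ and contains all edge directions; as these directions themselves have minimal enclosing width $\Lambda(v)$, the swept arc must coincide with a minimal $\Lambda$-\cone. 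In particular, $v$ starts its rotation facing one bounding edge of this cone and ends facing the other. Hence, ordering the edges of $E(v)$ by their scan time, the first and the last edge scanned at $v$ are exactly the two edges bounding a minimal $\Lambda$-\cone of $v$.

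It then remains to pass from ``globally first/last'' to ``locally first/last at each endpoint.'' Suppose $vw$ is the globally first scanned edge, so $S(vw)\le S(e)$ for every $e\in E$. In particular $S(vw)\le S(e)$ for every $e\in E(v)$ and every $e\in E(w)$, so $vw$ has minimum scan time among the edges at $v$ and among the edges at $w$; that is, $vw$ is the first edge scanned at both of its endpoints. By the first paragraph, $vw$ therefore bounds a minimal $\Lambda$-\cone of $v$ and a minimal $\Lambda$-\cone of $w$, as claimed. The case where $vw$ is the globally last scanned edge is entirely symmetric: it attains the maximum scan time at each endpoint and hence is the last edge scanned there, which is the opposite bounding edge of the respective cone.

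The only delicate point is ties, i.e.\ two distinct edges incident to a common vertex that share a scan time. Because the heading of a vertex points in a single direction at any given instant, two such edges must point in the same direction and therefore determine the same bounding ray of the cone, so the conclusion is unaffected. I expect the main (if modest) obstacle to be arguing cleanly that a monotone sweep of total angle exactly $\Lambda(v)$ covering all edge directions must coincide with a minimal $\Lambda$-\cone; this is precisely where the defining requirement of a \mincover --- that the total rotation \emph{equals} $\Lambda(v)$, not merely that it is at least $\Lambda(v)$ --- does the real work.
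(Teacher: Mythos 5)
Your proof is correct and follows exactly the reasoning the paper leaves implicit: the statement appears there as an unproved observation, whose intended justification is precisely your argument that in a \mincover each vertex sweeps a monotone arc of width exactly $\Lambda(v)$ containing all of its edge directions, so this arc is a minimal $\Lambda$-\cone bounded by the locally first and last scanned edges, and a globally first (or last) edge is locally first (or last) at both of its endpoints. Your extra care about ties and about why the swept arc must coincide with a minimal \cone simply makes the observation's tacit argument explicit.
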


In the gadgets, we will prescribe the two edges bounding the $\Lambda$-\cone of a vertex.

\begin{observation} \label{clm:lambdacone}
Consider a straight-line drawing of a graph $G$. For every vertex $v$ and every pair of  consecutive edges $e,e'$ at $v$,  we can add edges incident to $v$ (and new vertices of degree 1), such that in the resulting drawing,  $e$ and $e'$ bound the $\Lambda$-cone of $v$.
\end{observation}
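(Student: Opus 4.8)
The plan is to reformulate the goal in terms of empty angular sectors and then simply ``pad'' the drawing around $v$. Recall that the $\Lambda$-\cone of $v$ is the smallest cone with apex $v$ containing the directions of all edges in $E(v)$; equivalently, its complementary outer \cone is a \emph{largest} empty angular sector at $v$, and the two edges bounding the $\Lambda$-\cone are exactly the two edges flanking this largest empty sector. Thus the statement amounts to guaranteeing that, after adding edges, the empty angular sector lying between the consecutive edges $e$ and $e'$ becomes the \emph{unique} largest empty sector at $v$.

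First I would fix coordinates at $v$ and sort the edges of $E(v)$ by the direction in which they leave $v$; this cyclic order partitions the circle of directions at $v$ into angular gaps between angularly consecutive edges. Since $e$ and $e'$ are consecutive, one of these gaps, say of width $\gamma$, is precisely the empty sector flanked by $e$ and $e'$; as $e$ and $e'$ have distinct directions we have $\gamma > 0$ (and in a degenerate drawing an arbitrarily small perturbation restores $\gamma>0$ without changing the combinatorial order). The key step is then: for every \emph{other} gap, of width $\delta$, I add $\lceil \delta/\gamma\rceil$ new edges from $v$ to fresh vertices of degree $1$, choosing their directions strictly inside that gap and distinct from all existing ones, so as to subdivide it into sub-sectors each of width strictly less than $\gamma$. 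Any direction is realizable by placing the new endpoint at a suitable point of $\mathbb{R}^2$, and crossings are permitted since the drawing need not be crossing-free.

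After this padding, every empty angular sector at $v$ other than the one between $e$ and $e'$ has width strictly less than $\gamma$, so that sector is the unique largest empty sector; hence its flanking edges $e$ and $e'$ bound the $\Lambda$-\cone of $v$, and all remaining edge directions lie in the complementary cone, as required. The one point that needs care is that the modification is \emph{local} to $v$: because each added edge ends in a new vertex of degree $1$, no previously present vertex other than $v$ receives an incident edge, so all other $\Lambda$-\cones are untouched, and the degree-$1$ endpoints themselves have trivial zero-angle cones. I expect the main obstacle to be purely bookkeeping --- verifying that the new directions can be placed strictly interior to each gap and that the count $\lceil \delta/\gamma\rceil$ indeed forces every resulting sub-sector below $\gamma$ --- rather than anything conceptually deep.
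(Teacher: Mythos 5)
Your proposal is correct and follows exactly the route the paper intends: the statement is left as an unproved observation, and its implicit justification is precisely your padding argument --- filling every angular gap at $v$ other than the one between $e$ and $e'$ with edges to fresh degree-$1$ vertices, so that this gap becomes the unique largest empty sector and its flanking edges $e,e'$ bound the $\Lambda$-cone. The only cosmetic caveat is your perturbation remark for the degenerate case $\gamma=0$, which moves vertices rather than merely adding edges and so technically exceeds what the observation permits; this case never arises in the paper's gadget constructions, where consecutive edges always have distinct directions.
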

\cref{clm:lambdacone} allows us to choose the maximal angle between consecutive
edges. In the same manner, we can ensure that $\Lambda(v)$ is equal for all
vertices $v$ of the gadgets (excluding the newly added vertices of degree 1).

Next we construct the individual gadgets. Consider a  wire fragment $G_f$ as depicted in \cref{fig:swap-fragment}.
Using \cref{clm:lambdacone}, we make sure that the $\Lambda$-cones correspond to the blue arcs.
The vertices $s,t,u$ can be shared between subgraphs with the following properties concerning their direction of rotation in a \mincover. Note that given a \mincover, we can obtain a second \mincover by reversing all directions.

\begin{lemma}\label{lem:swap-properties} The vertices $s,t,u$ in the wire fragment $G_f$ have the following properties.
\begin{enumerate}
		\item In every \mincover of $G_f$, the vertices $u$ and $t$ rotate in the same direction, while $s$ rotates in the opposite direction.
		\item There exists a \mincover of $G_f$.
	\end{enumerate}
\end{lemma}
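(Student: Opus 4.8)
The plan is to analyze the wire fragment $G_f$ as a small graph with prescribed $\Lambda$-\cone{}s (via \cref{clm:lambdacone}) and to track how the rotation direction of one vertex forces the rotation directions of its neighbors in any \mincover. The central tool is \cref{obs:min-scan-first-last}: in a \mincover every vertex rotates monotonically through exactly its $\Lambda$-\cone, so the first and last scanned edges at $v$ are precisely the two bounding edges of the $\Lambda$-\cone of $v$. Fixing a rotation direction at a vertex therefore fixes \emph{which} bounding edge is scanned first and which last, and hence pins down the relative order in which the shared edges are scanned. The strategy for part~(1) is to pick one shared edge, say the one joining $u$ to its neighbor, and propagate: knowing the scan order forced at $u$ constrains the scan order at the vertex on the other endpoint of that edge, which in turn constrains the next vertex along the fragment, and so on until we reach $s$ and $t$.

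First I would set up coordinates (or at least the combinatorial angular order) for the vertices of $G_f$ in \cref{fig:swap-fragment}, recording for each vertex the two bounding edges of its $\Lambda$-\cone as marked by the blue arcs. Then, for each edge shared between two vertices, I would write down the constraint that the scan time $S$ of a shared edge must be consistent with both endpoints rotating monotonically through their cones. The key local lemma I would prove is a \emph{parity/propagation rule}: along a single edge $vw$, if $v$ rotates clockwise then the timing forces $w$ to rotate in a determined direction (either the same or the opposite, depending on the geometry of the two cones relative to the edge). Applying this rule edge by edge through the fragment yields the claim that $u$ and $t$ share a direction while $s$ is opposite. This is essentially the ``alternating angles'' timing argument already used in the proof of \cref{thm:approxiBip}, specialized to the fixed geometry of the fragment.

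For part~(2), the plan is purely constructive: I would exhibit an explicit \mincover. Choose a global rotation direction (say clockwise) consistent with the forced relationships from part~(1), assign each vertex the monotone rotation through its $\Lambda$-\cone in that direction, and then verify that the induced scan times can be chosen to satisfy $|S(e)-S(e')|\ge \alpha(e,e')$ for every adjacent pair. Because each vertex rotates through exactly $\Lambda(v)$, the resulting scan cover is automatically a \mincover once feasibility is checked; feasibility reduces to checking that the finitely many timing inequalities along the fragment admit a common solution, which can be done by scheduling the scans greedily in the forced order.

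The main obstacle I anticipate is part~(1): proving that the forced directions are \emph{uniquely} determined rather than merely consistent. The propagation rule gives a necessary relationship along each edge, but to conclude that every \mincover obeys it, I must rule out alternative monotone schedules at each vertex — in principle a vertex could be made to rotate through its cone in either direction, and I must show the shared-edge timing constraints eliminate one choice at each step. Handling this cleanly requires carefully exploiting \cref{obs:min-scan-first-last} to argue that the \emph{first} and \emph{last} scanned edges are forced, so that a vertex's direction is determined by which of its two bounding edges must be scanned earliest given its neighbor's schedule; getting the geometry of \cref{fig:swap-fragment} to make exactly one direction feasible at each propagation step is where the real work lies. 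The remark that reversing all directions of a \mincover yields another \mincover confirms that the \emph{global} direction is free, so the uniqueness I need is only relative (direction of $u,t,s$ relative to one another), which is exactly what the propagation argument delivers.
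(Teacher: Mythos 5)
Your plan for part (2) coincides with the paper's (exhibit one explicit scan order and check it; the paper simply lists an order of all 14 edges), but your plan for part (1) has a genuine gap. You read \cref{obs:min-scan-first-last} as the local statement that each vertex scans the two bounding edges of its $\Lambda$-\cone first and last --- but that is just the definition of a \mincover. The observation actually used is \emph{global}: the edge whose scan time is earliest (resp.\ latest) in the whole schedule must bound a minimal $\Lambda$-\cone at \emph{both} of its endpoints. The wire fragment is engineered so that exactly two edges, $su$ and $tv_2$ (the red edges in \cref{fig:swap-fragment}), have this two-sided property; hence one of them is the globally first scanned edge and the other the globally last, and this alone pins down the directions: if $su$ is first, then $s$ must sweep away from $su$ into its cone (counterclockwise), $u$ away from $su$ into its cone (clockwise), and $t$ must end at $tv_2$, hence also turns clockwise. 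That three-line argument is the entire proof of part (1); no statement about the interior vertices is needed.

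Your substitute --- a per-edge propagation rule, ``if $v$ rotates clockwise then the timing along $vw$ determines $w$'s direction'' --- is not provable, because it is false for a single edge. A scan cover couples the two endpoints of $vw$ only through the single number $S(vw)$, and since each vertex may begin its sweep at any time and pause arbitrarily, a vertex $w$ for which $vw$ lies strictly inside its $\Lambda$-\cone can traverse that cone in \emph{either} direction and still face $v$ at time $S(vw)$. So no local forcing exists, and your chain of implications through the interior vertices $v_1,\dots,v_5$ (whose directions the lemma does not even claim to be forced) cannot be established step by step. You flag exactly this point as ``where the real work lies,'' but that work is the proof, and it cannot be carried out edge-by-edge; the missing idea is the global first/last-edge argument above.
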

\begin{proof}
	Suppose we have a \mincover of $G_f$ with scan order $P$. Note that one of the edges $s u$, $t v_2$ is first in $P$, and the other last (these are the only edges bounding a minimal $\Lambda$-cone, see Observation~\ref{obs:min-scan-first-last}). Suppose $s u$ is the first edge in $P$. Then $s$ turns counterclockwise and $u$ turns clockwise. Additionally, $t$ turns clockwise, because $t v_2$ is the last edge. The other case, in which $t v_2$ is the first edge, is analogous.
	
	The following scan order yields a \mincover of $G_f$ in which $s$ rotates counterclockwise, see also the edge labels in Figure~\ref{fig:swap-fragment}: $s u$, $u v_3$, $v_2 v_3$, $v_1 v_2$, $u v_1$, $v_1 v_5$, $s v_4$, $s v_5$, $t v_5$, $v_3 v_5$, $v_3 v_4$, $v_1 v_4$, $t v_4$, $t v_2$. 
\end{proof}

\begin{figure}[htb]
	\centering
	\begin{subfigure}[t]{.48\textwidth}
		\centering
    \includegraphics[scale=.9]{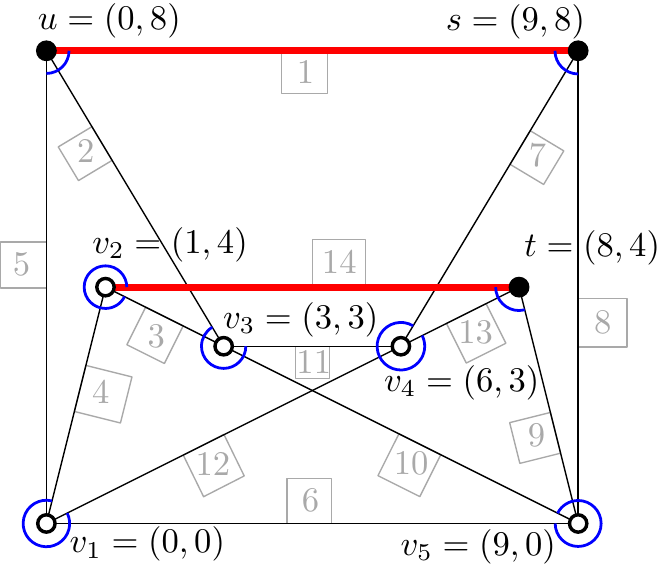}
		\caption{Wire fragment with vertices $s,t,u$ as connectors. Coordinates are given relative to $v_1$.
		}
		\label{fig:swap-fragment}
	\end{subfigure}\hfill
	\begin{subfigure}[t]{.48\textwidth}
		\centering
      \includegraphics{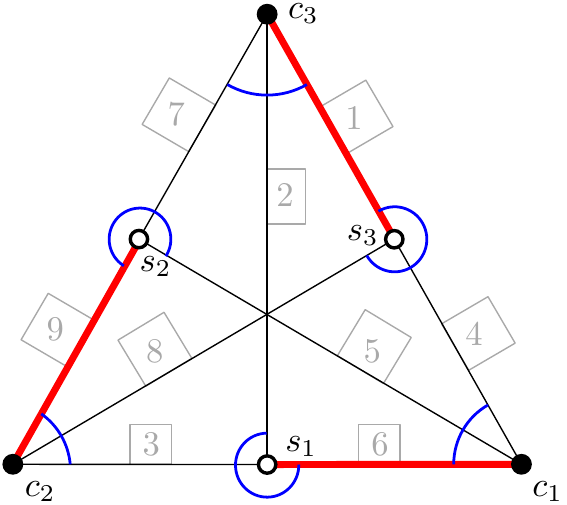}
	    \caption{Clause gadget with vertices $c_1,c_2,c_3$ as connectors on the corners of an equilateral triangle.}
	\label{fig:clause-gadget}
	\end{subfigure}
	\caption{Blue arcs indicate the $\Lambda$-\cone of each vertex. Red edges are candidates for the first or last scanned edge in a \mincover. Grey edge labels indicate a scan order of a \mincover.}
	\label{fig:gadgets}
\end{figure}

For a variable $x\in X$, the variable gadget $G_x$ consists of a chain of wire fragments, as depicted in \cref{fig:variable-gadget}. Denoting the number of occurrences of $x$ in $I$ by $k$, we create $2k$ copies of the wire fragment $G_f^i$ with vertices $s_i,t_i,u_i$. Rotate the wire fragments with even index by $\ang{180}$.
To combine the wire fragments, we  identify the vertices $s_i$ and $s_{i-1}$ for even $i$ and
the vertices $u_i$ and $u_{i-1}$ for odd  $i$.
We define $V_x^c:=\{t_{2i}\mid i=1,\dots, k\}$ as the \emph{connector vertices} of the variable gadget.

\begin{figure}[htb]
	\centering
	\includegraphics{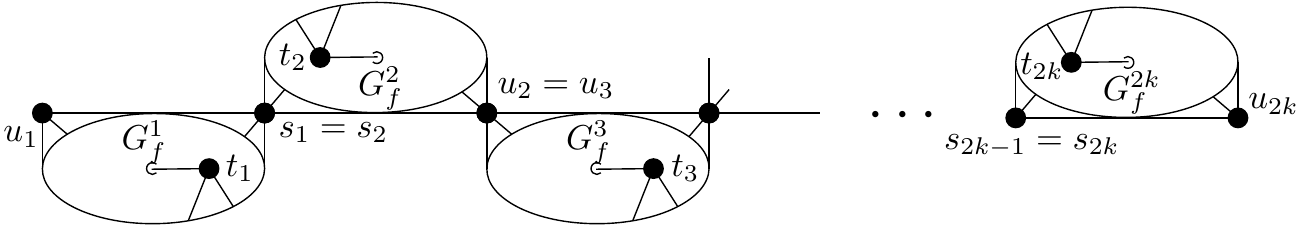}
	\caption{The variable gadget $G_x$ of a variable $x\in X$ consists of $2k$ wire fragments, with every second copy rotated by \ang{180}.}
	\label{fig:variable-gadget}
\end{figure}

\begin{lemma}\label{lem:variable-gadget}
	The variable gadget $G_x$ has the following properties.
	\begin{enumerate}
		\item In every \mincover of $G_x$, all vertices in $V_x^c$ rotate in the same direction.
		\item There exists a \mincover of $G_x$.
	\end{enumerate}
\end{lemma}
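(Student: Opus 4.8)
The plan is to prove both properties of the variable gadget $G_x$ by leveraging \cref{lem:swap-properties} applied to each constituent wire fragment, and then propagating the rotation-direction constraints through the identified (shared) connector vertices. First I would recall the combinatorial structure: $G_x$ is a chain of $2k$ wire fragments $G_f^1, \dots, G_f^{2k}$, where even-indexed copies are rotated by $\ang{180}$, and consecutive fragments share vertices via the identifications $s_i = s_{i-1}$ (for even $i$) and $u_i = u_{i-1}$ (for odd $i$). The connector vertices are $V_x^c = \{t_{2i} : i = 1, \dots, k\}$.

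For property 1, the key observation is that any \mincover of $G_x$ restricts to a \mincover on each individual wire fragment $G_f^i$, since the $\Lambda$-cones of all shared vertices are preserved under the identification (each vertex's incident edges, and hence its $\Lambda$-cone, are inherited consistently from the fragment it lives in). Thus \cref{lem:swap-properties}(1) applies fragment by fragment: within each $G_f^i$, vertices $u_i$ and $t_i$ rotate in the same direction, while $s_i$ rotates oppositely. I would then chase the rotation directions along the chain. Crucially, the $\ang{180}$ rotation applied to even-indexed fragments flips the geometric sense, so that a shared vertex that is, say, ``clockwise'' with respect to $G_f^i$ corresponds to a consistent direction when reinterpreted in $G_f^{i+1}$; the identifications $s_i = s_{i-1}$ and $u_i = u_{i-1}$ together with this $\ang{180}$-flip are precisely engineered so that the ``same direction'' and ``opposite direction'' relations compose to force every connector $t_{2i}$ into a common rotation direction. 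Concretely, I would show by induction on $i$ that the direction of $t_{2i}$ is determined by the direction of $t_2$, independent of $i$.

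For property 2, I would invoke \cref{lem:swap-properties}(2): each wire fragment admits a \mincover, and since \mincover{}s come in pairs (reversing all rotation directions yields another \mincover, as noted before \cref{lem:swap-properties}), I can choose the orientation of each fragment's \mincover so that the directions agree on all shared vertices. I would then argue that these fragment-local \mincover{}s can be merged into a single global scan cover of $G_x$: the scan times can be scheduled so that the per-vertex rotation at every shared vertex still equals $\Lambda(v)$ in a single rotational direction, which is exactly the \mincover condition. The consistency of directions at the identified vertices is what guarantees that no vertex is forced to rotate in two conflicting senses, so the merge is well defined.

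The main obstacle I expect is the bookkeeping of rotation directions through the $\ang{180}$-rotation of alternating fragments combined with the two different identification rules ($s$-sharing for even $i$, $u$-sharing for odd $i$). Getting the sign conventions right so that the ``same/opposite direction'' relations from \cref{lem:swap-properties} genuinely compose to a \emph{single} common direction on all of $V_x^c$ is delicate, and a careful figure-driven case analysis (matching \cref{fig:variable-gadget}) will be needed to verify that the flips cancel correctly at each junction. The scheduling argument for property 2 is comparatively routine once the directional consistency is established, since it amounts to concatenating and time-shifting the individual fragment scan orders.
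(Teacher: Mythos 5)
Your overall skeleton (apply \cref{lem:swap-properties} to each fragment, propagate directions through the shared vertices, then concatenate fragment schedules) is exactly the paper's route, but two of your key justifications are wrong or missing. First, the mechanism you invoke for property 1 does not exist: a rotation of the plane by \ang{180} does \emph{not} flip clockwise versus counterclockwise --- only reflections reverse orientation. The relations of \cref{lem:swap-properties} ($u_i$ and $t_i$ rotate the same way, $s_i$ the opposite way) are intrinsic to the fragment and invariant under any rigid motion, so there is no ``flip'' that needs to cancel anywhere. The propagation works for a purely combinatorial reason: fragments are glued only at vertices of the \emph{same role} ($s_i=s_{i-1}$ for even $i$, $u_i=u_{i-1}$ for odd $i$), so the shared vertex carries a single direction, both neighboring fragments' relations refer to that direction, and all $t_i$ come out equal by induction along the chain. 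If you run your planned ``figure-driven case analysis'' expecting a chirality flip at every even fragment, you will be verifying something false; the \ang{180} rotation exists for geometric placement, not for the direction bookkeeping.

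Second, and more seriously, your argument for property 2 has a genuine gap. Consistency of rotation directions at the shared vertices is necessary but not sufficient for the concatenation to be a \mincover: you must also show that when a shared vertex $v$ first sweeps the cone spanned by its edges in one fragment and then continues, in the same direction, across the gap to the cone spanned by its edges in the next fragment, the total rotation is exactly $\Lambda(v)$ (its $\Lambda$ in $G_x$, which properly contains both fragments' cones) and not more. This is a geometric property of the construction --- the alternating \ang{180} rotation places the two cones at each shared vertex so that $v$ has \emph{two} minimum $\Lambda$-cones of equal size, one for each order in which its two fragments can be scanned --- and it is precisely what the paper invokes at this point (``vertices used in two fragments rotate in the same direction and have two $\Lambda$-cones, one for each scan order of the two fragments''). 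Without this ingredient, a same-direction concatenation could force $v$ to rotate across the wrong (larger) gap and exceed $\Lambda(v)$, so the merge you dismiss as ``comparatively routine'' is exactly where the geometric content lies. A smaller imprecision in the same spirit: the $\Lambda$-cone of a shared vertex is \emph{not} ``preserved under the identification'' --- the restriction of a \mincover of $G_x$ to a fragment is indeed a \mincover of that fragment, but because a monotone sweep restricted to a subset of edges remains monotone over that subset's cone, not because the cones coincide.
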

\begin{proof}
	Suppose we have a \mincover of $G_x$. By \cref{lem:swap-properties}, $u_i$ and $t_i$ rotate in the same direction, while $s_i$ rotates in the opposite direction. Because the wire fragments are all connected at vertices with identical properties, all copies rotate in the same direction across all wire fragments; in particular, the vertices $t_i$ rotate in the same direction.
	
	For each wire fragment $G_f^i$ in $G_x$, there exists a scan cover that
is a \mincover for the fragment (\cref{lem:swap-properties}). Concatenating the
schedules from $G_f^{1}$ to $G_f^{k}$ in this order yields a schedule for
$G_x$. This results in a \mincover for $G_x$, as vertices used in two
fragments rotate in the same direction and have two $\Lambda$-cones, one for
each scan order of the two fragments.  \end{proof}

For the construction of the clause gadget $G_C$ of $C\in\mathcal{C}$, see
\cref{fig:clause-gadget}. We place the \emph{connector vertices}  $c_1,c_2,c_3$
at the corners of an equilateral triangle, the vertices $s_1,s_2,s_3$ on the
midpoints of the sides as illustrated, and insert the edges $c_i s_j$ for all
$i,j\in \{1,2,3\}$.  Using \cref{clm:lambdacone}, we ensure that the
$\Lambda$-cones of $s_1$, $s_2$, and $s_3$ correspond to the blue arcs in the
figure.
Note that a small perturbation suffices to obtain rational coordinates and does not  harm the construction.

\begin{lemma}\label{lem:clause-gadget}
	The clause gadget $G_c$ has the following properties.

	\begin{enumerate}
		\item In every \mincover of $G_c$, not all connector vertices in $G_c$ rotate in the same direction.
		\item For each assignment of directions to vertices in $G_c$
that does not assign them all the same direction, there exists a \mincover of
$G_c$, such that every vertex in $v$ rotates in its assigned direction.
	\end{enumerate}
\end{lemma}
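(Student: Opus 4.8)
The plan is to reduce the existence of a \mincover with prescribed rotation directions to a purely combinatorial acyclicity condition, and then to verify both properties on the resulting digraph. By \cref{obs:min-scan-first-last} and the definition of a \mincover, fixing the rotation direction (clockwise or counterclockwise) of a vertex $v$ determines the monotone order in which $v$ sweeps its incident edges: the two edges bounding the prescribed (blue) $\Lambda$-\cone are scanned first and last, and the remaining edges appear in their angular order in between. I would collect these local orders into a digraph $D$ on the edge set of $G_C$, adding an arc $e\to e'$ whenever $e$ is scanned immediately before $e'$ at a common vertex. As the deadlock in \cref{fig:G8-2-OrderAndDepB} illustrates, a scan cover realizing all prescribed directions exists if and only if $D$ is acyclic: in the acyclic case a topological order of $D$ yields valid scan times by spacing consecutive scans at each vertex by at least their angle (pauses cost time but no rotation, so each vertex still sweeps exactly $\Lambda(v)$), exactly as in the explicit schedule of \cref{lem:swap-properties}. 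The lemma thus becomes a statement about when $D$ is acyclic.

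For property~1 I would first cut down the number of cases using symmetry. The gadget has the threefold rotational symmetry of the equilateral triangle (cyclically permuting $c_1,c_2,c_3$ and $s_1,s_2,s_3$) together with the global direction-reversal symmetry noted before \cref{lem:swap-properties}. Assume for contradiction that all three connectors rotate the same way; by reversal we may take this to be counterclockwise. The directions of $s_1,s_2,s_3$ remain free, giving $2^3$ choices, but up to the threefold symmetry and reversal these collapse to essentially two representatives: all three $s_j$ equal, and exactly one $s_j$ reversed. For each representative I would read the sweep orders off the blue arcs and exhibit a concrete directed cycle in $D$ of the form $c_{i_1}s_{j_1}\to c_{i_2}s_{j_1}\to c_{i_2}s_{j_2}\to\cdots$, alternating between an arc forced at a connector and an arc forced at an internal vertex and looping around the triangle. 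Since every internal assignment then contains such a cycle, no \mincover can keep all connectors equal.

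For property~2 I would argue constructively. Given a non-monochromatic assignment of the connector vertices, the threefold symmetry lets me assume $c_1,c_2$ rotate counterclockwise and $c_3$ clockwise. I would assign the internal vertices the matching directions dictated by this reading and write down an explicit scan order of the nine edges $c_is_j$, in the spirit of the orders listed in \cref{lem:swap-properties,lem:variable-gadget}, then check that (i) at every vertex the edges appear in the monotone order of its $\Lambda$-\cone, so the schedule is a \mincover, and (ii) $D$ is acyclic for this choice, which a topological sort of the small digraph confirms. The reversed assignment is handled by the reversal symmetry.

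The main obstacle will be property~1: showing that \emph{every} choice of internal directions produces a cycle when the connectors are monochromatic, rather than only the most symmetric one. The symmetry reduction above is precisely what makes this tractable, shrinking the eight internal configurations to two genuine cases; the remaining work is the careful bookkeeping of the sweep orders induced by the prescribed $\Lambda$-\cones and the explicit tracing of a directed cycle through the edges $c_is_j$. Completeness in property~2 is then routine once a single valid scan order is exhibited and verified acyclic.
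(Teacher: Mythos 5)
Your reduction of ``\mincover with prescribed directions'' to acyclicity of the precedence digraph $D$ is sound, and your plan for property~2 (one explicit scan order for a non-monochromatic assignment, extended by reversal and the threefold symmetry) is essentially the paper's proof, which exhibits the order $c_3s_3$, $c_3s_1$, $c_2s_1$, $c_1s_3$, $c_1s_2$, $c_1s_1$, $c_3s_2$, $c_2s_3$, $c_2s_2$ and appeals to exactly those symmetries. The genuine problem is property~1. Your symmetry reduction from the $2^3$ internal assignments to ``essentially two representatives'' does not follow from the symmetries you invoke: you have already spent the reversal symmetry to normalize the three connectors to counterclockwise, and applying it again flips the connectors back, so it cannot be reused to identify internal configurations. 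The only symmetry fixing the monochromatic connector directions is the threefold rotation, which permutes $s_1,s_2,s_3$ cyclically without flipping their directions; hence the internal assignments fall into \emph{four} orbits, indexed by the number of internal vertices turning clockwise ($0,1,2,3$), not two. Even composing a mirror symmetry of the gadget with reversal only yields transpositions of the internal labels and never merges the orbit with $k$ clockwise internals into the one with $3-k$. So the case analysis is larger than you claim, and since no directed cycle is actually exhibited for any case, property~1 remains unproven as written. (The cycles do exist in every case, so your method would terminate, but all four configurations must be traced.)

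You are also missing the short global argument that makes this case analysis unnecessary, and which is the paper's actual proof. A connector $c_i$ turns clockwise if and only if $s_ic_i$ is the first edge scanned at $c_i$, and by \cref{obs:min-scan-first-last} the edges $s_ic_i$ are the \emph{only} candidates for the globally first and globally last scanned edges of any \mincover of $G_c$. The globally first edge $s_ic_i$ is in particular first at $c_i$, forcing $c_i$ clockwise; the globally last edge $s_jc_j$ (with $j\neq i$, since first and last are distinct) is last at $c_j$, forcing $c_j$ counterclockwise. Hence at least one connector turns each way, with no inspection of the internal vertices' directions at all. I recommend adopting this argument for property~1 and retaining your framework (or the paper's explicit orders) only for property~2.
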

\begin{proof}
Consider a \mincover $S$ of $G_c$.
Note that a connector vertex $c_i$ turns clockwise in $S$ if and only if it scans the edge $s_ic_i$ first; this edge is highlighted red in \cref{fig:clause-gadget}.

By \cref{obs:min-scan-first-last}, the edges $s_ic_i$ are the unique candidates for the first or last edge scanned in $S$. Consequently,  at least one connector vertex turns clockwise, and at least one turns counterclockwise.

What remains to be shown is that for all configurations of not all equal rotations, there exists a scan order that covers the minimum angle. The order $c_3 s_3$, $c_3 s_1$, $c_2 s_1$, $c_1 s_3$, $c_1 s_2$, $c_1 s_1$, $c_3 s_2$, $c_2 s_3$, $c_2 s_2$ is minimal and has $c_3$ clockwise and $c_1,c_2$ counterclockwise. Reversing this order is also minimal and has $c_1,c_2$ clockwise and $c_3$ counterclockwise. Up to symmetry, these are all possible configurations in which $c_1,c_2,c_3$ do not all rotate in the same direction.
\end{proof}

A wire gadget $G_w$ consists of a chain of $18$ wire fragments $G_f^i$ such
that two connectors on the ends of the chain differ in angle by $\theta$.
Observe that the angle between the bisectors of the maximum angles of vertices
$u_i$ and $s_i$ is $\ang{90}$. The construction consists of five parts
that each will rotate the chain at an angle of $\theta/5$. (We use five parts to ensure the angle $\ang{90}-\theta/5$ is not too small, which we need for \cref{cor:approximtion-hardness-2D}) We connect the first four
fragments, such that $u_2 = s_1$, $s_3 = s_2$, and $s_4=u_3$
(Figure~\ref{fig:wire-gadget}). We match the bisectors of all these
connections, except for the connection between $G_f^1$ and $G_f^2$, which
is $\ang{90}-\theta/5$. This is repeated four more times. The final part has only two wire fragments with an angle $\ang{90}-\theta/5$ between them.

\begin{figure}[htb]
	\centering
	\includegraphics[page=2]{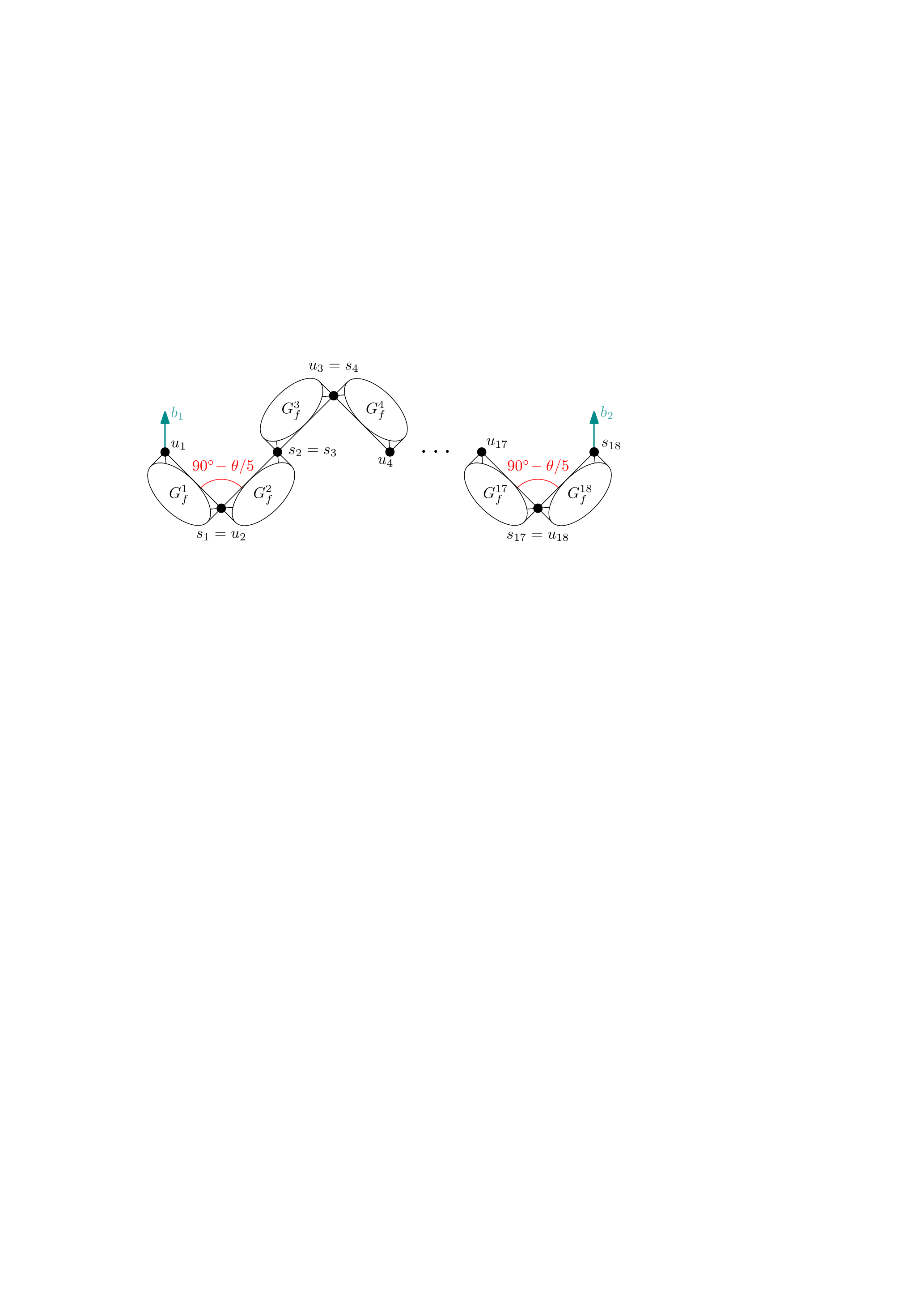}
	\caption{The wire gadget consists of a chain of $18$ wire fragments $G_f^i$. Red arcs indicate angles that vary such that the angle of $b_1$ and $b_2$ is equal to a given $\theta$.}
	\label{fig:wire-gadget}
\end{figure}

\begin{lemma}\label{lem:wire-gadget}
	Let $b_1$ and $b_2$ be the bisectors of the outer cones of $u_1$ and $s_{18}$ in the wire gadget $G_w$, respectively (see \cref{fig:wire-gadget}).
	For any angle $\theta$, the wire gadget $G_w$ can be constructed such that
	the (counterclockwise) angle between $b_1$ and $b_2 $ is $\theta$ and fulfills the following properties.
\begin{enumerate}
		\item In every \mincover of $G_w$, both connector vertices $u_1$ and $s_{18}$ in $G_w$ rotate in the same direction.
		\item There exists a \mincover of $G_w$.
	\end{enumerate}
\end{lemma}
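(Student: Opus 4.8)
The plan is to establish the two combinatorial properties exactly as in the proof of \cref{lem:variable-gadget}, since $G_w$ is again a chain of wire fragments $G_f^i$ glued at connector vertices, and to treat the angle condition on $\theta$ via the five-part bending construction. The only genuinely new content beyond \cref{lem:swap-properties,lem:variable-gadget} is the geometric control of the total turning angle, so I would spend most of the effort there and keep the direction and existence arguments parallel to the variable gadget.

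For Property~1, I would fix a \mincover of $G_w$ and encode the rotation direction of each fragment $G_f^i$ by a sign $D_i\in\{+1,-1\}$ recording the direction of $u_i$ (hence of $t_i$); by \cref{lem:swap-properties}, $s_i$ then rotates with sign $-D_i$. Each identification of two connectors forces the corresponding signs to agree, which yields a fixed relation between consecutive fragments: gluing an $s$-connector to a $u$-connector flips the sign, whereas gluing two connectors of the same type preserves it. Walking from $u_1$ to $s_{18}$ along the prescribed identifications ($u_2=s_1$, $s_3=s_2$, $s_4=u_3$ and their analogues in the later parts, together with the gluings joining the five parts) and counting the induced flips determines the relation between $D_1$ and $D_{18}$; checking the parity of the total number of flips confirms $d(u_1)=d(s_{18})$, i.e.\ both connectors rotate in the same direction. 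Since reversing all directions of a \mincover again gives a \mincover, this relation is the only constraint, as claimed. This parity bookkeeping is routine but must be carried out carefully over all $18$ fragments.

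For Property~2, I would concatenate the per-fragment \mincover{}s guaranteed by \cref{lem:swap-properties} in the order $G_f^1,\dots,G_f^{18}$. As in \cref{lem:variable-gadget}, a vertex shared by two fragments rotates in one consistent direction (by the sign analysis above) and has a valid $\Lambda$-\cone for each of its two local scan orders, so the concatenated schedule never forces a vertex outside its $\Lambda$-\cone and is a global \mincover.

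The geometric heart of the argument is realizing the prescribed angle $\theta$ between the outer-\cone bisectors $b_1$ of $u_1$ and $b_2$ of $s_{18}$. Within a single fragment these bisectors are $\ang{90}$ apart, so matching bisectors at every gluing would make the chain turn by a fixed amount. Instead, at the five designated gluings (the $G_f^1$--$G_f^2$ connection of each of the first four parts, and the single gluing of the final two-fragment part) I would set the angle to $\ang{90}-\theta/5$, deflecting the chain by $\theta/5$ at each place; summing over the five parts gives a net deflection of exactly $\theta$, so $b_2$ is rotated by $\theta$ relative to $b_1$. Spreading the deflection over five parts rather than into a single bend of $\ang{90}-\theta$ keeps each local bend bounded away from degeneracy, which is the slack later exploited in \cref{cor:approximtion-hardness-2D}. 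Finally, invoking \cref{clm:lambdacone} I would verify that all prescribed $\Lambda$-\cones (and the common value $\Lambda(v)$ across vertices) remain realizable under these bends. I expect this last point to be the main obstacle: simultaneously certifying that every deflected gluing preserves the intended $\Lambda$-\cones while the bisectors accumulate to the exact target angle $\theta$.
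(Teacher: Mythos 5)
Your Property~1 argument (the parity bookkeeping along the chain) and your accounting of the total deflection $5\cdot\theta/5=\theta$ match the paper's proof. The genuine gap is in Property~2. Your existence argument rests on the claim that every shared connector vertex ``has a valid $\Lambda$-\cone for each of its two local scan orders,'' i.e.\ two $\Lambda$-cones, so that the per-fragment schedules can be concatenated in any order. This is false precisely at the five bent gluings: there the angle $\ang{90}-\theta/5$ between the two fragments lies \emph{inside} the minimum \cone of the shared vertex, so that vertex has essentially a single minimal cone containing both fragments' edges plus the bend. For such a vertex, a minimal rotation forces it to scan one fragment's edges entirely, sweep the bend, and then scan the other fragment's edges -- in an order determined by its rotation direction and the orientation of the bend. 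Only the matched gluings enjoy the two-$\Lambda$-cone flexibility you invoke; you cannot get the bent gluings into that form (your closing worry about ``preserving the intended $\Lambda$-cones under the bends'' cannot be discharged), because if the bend angle were outside both cones the chain would not deflect at all and the bisector condition on $\theta$ would fail.

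Concretely, this is why the paper's proof (i) checks that under the parity assignment (clockwise iff odd chain position) every bent vertex sits at an even position and hence rotates counterclockwise, and (ii) concatenates the fragment schedules in the specific order $G_f^{18},\dots,G_f^1$, verifying that each bent vertex then covers both incident fragments and the bend in one counterclockwise minimal-cone rotation. Your concatenation $G_f^1,\dots,G_f^{18}$ is the reverse, and nothing in your argument checks it against the bend geometry; with the bend oriented as in \cref{fig:wire-gadget}, scanning fragment $j$ before fragment $j+1$ at a counterclockwise bent vertex would force it to rotate beyond $\Lambda(v)$, so the schedule you build need not be a \mincover. To repair the proof you must treat the bent vertices separately from the matched ones: fix the global rotation assignment, determine which side of the bend each rotation direction must scan first, and choose the concatenation order of the fragments accordingly.
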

\begin{proof}
	By construction, the angle between the bisectors of the outer cones of $u_{4i+1}$ and $s_{4i+2}$ is $\theta/5$, and the angle between the bisectors of the outer cones of $s_{4i+2}$ and $s_{4(i+1)+1}$ is~$0$. Therefore, the angle between the bisectors of the outer cones of $u_1$ and $s_{18}$ is $5\cdot \theta/5 = \theta$.
	
	Consider the order of the connector vertices of $G_w$, starting at $u_1$, and denote the $i$-th vertex by $c_i$.
	Suppose we have a \mincover of $G_w$. Any pair of connector vertices in the chain that belong to the same wire fragment rotates in the opposite direction (Lemma~\ref{lem:swap-properties}.1). Therefore, connector vertices with an odd number of connector vertices between them on the chain rotate in the same direction. Therefore, $u_1$ and $u_{18}$ rotate in the same direction.
	
	We now give a schedule in which $u_1$ and $s_{18}$ rotate clockwise.
By Lemma~\ref{lem:swap-properties}.2, there exists a \mincover schedule
relative to each wire fragment, such that  $c_i$ rotates clockwise if and only if $i$
is odd. Concatenating the schedules from $G_f^{18}$ to $G_f^1$ in this
order yields a schedule for $G_w$. The vertices internal to the wire fragments
have the same angles as in the earlier cover.  Each connector vertex that is
involved in one of the $\ang{90}-\theta/5$ angles rotates in
counterclockwise direction and has the angle of size
$\ang{90}-\theta/5$ inside the minimum \cone, so the concatenation of
the schedules covers this vertex with a minimum \cone.
	Each connector vertex that is not involved in one of the $\ang{90}-\theta/5$ angles has two $\Lambda$-cones, so these vertices can be scanned in a \mincover for any order of the gadgets.
\end{proof}

Finally, to construct $G_I$, we first place corresponding variable and clause gadgets in the plane. We place the gadgets in a row with all clauses to the right of all variables.
 For a variable that occurs in a clause, we connect their gadgets by a wire gadget. To this end, we first identify a connector vertex $c := t_{2i}$ of the variable $G_x$ and the connector vertex $\overline c := u_1$ of the wire gadget $G_w$.
 We rotate the wire such that the bisector $b$ of the $\Lambda$-cone of $c$ in $G_x$ and the bisector $\overline b $ of the outer cone of $\overline c$ in $G_w$ are equal. This ensures that the identified vertex $c=\overline c$ has two $\Lambda$-cones of size $\Lambda(c)$ (\cref{fig:connecting_gadgets}).

  By \cref{lem:wire-gadget} we can choose the bisector of $s_{18}$ corresponding to the connector vertex $c'$ of the clause. With the clauses sufficiently far to the right, we can write $c'-s_{18}$ as linear combination of $s_1-u_1$ and $s_2-u_2$ with positive coefficients. Therefore, we can scale $G_f^1$ and $G_f^2$ to move $s_{18}$ to $c'$. As above, 
  we get two $\Lambda$-cones of the same size.

\begin{figure}[htb]
  \centering
  \includegraphics{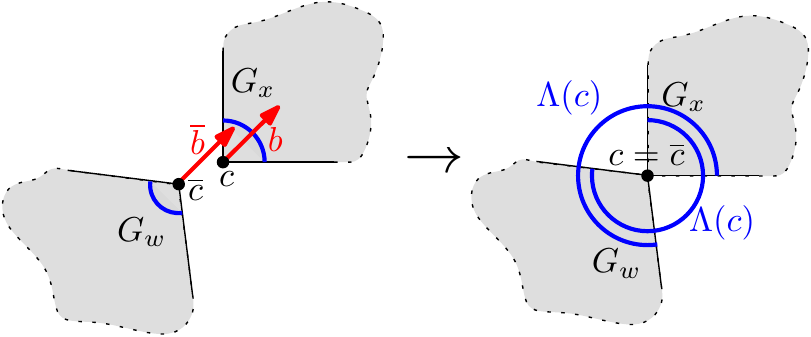}
  \caption{Connecting the gadget $G_w$ to $G_x$ such that we have two possible $\Lambda$-cones.}
  \label{fig:connecting_gadgets}
\end{figure}

We are now ready to  prove that \MSCtotEnergy and \MSClocEnergy are NP-hard.

\nphardness*
\begin{proof}
	Given an instance $I=(X,C)$ of \textsc{MNAE3SAT}, we construct $G_I$ with gadgets and auxiliary edges satisfying \cref{lem:clause-gadget,lem:variable-gadget,lem:wire-gadget}.
	Observe that $G_I$ is bipartite, because all gadgets are bipartite and the identified connector vertices can all be assigned the same color. Additionally, all vertices in $G_I$ have the same $\Lambda(v)$ by applying \cref{clm:lambdacone}.
	We show that $I$ is satisfiable if and only if $G_I$ has a \mincover. Because any optimal solution to \MSCtotEnergy and \MSClocEnergy is a \mincover, if a \mincover exists, this implies that both problems are NP-hard.

Suppose $I$ is satisfiable. Then there exists a
variable assignment~$\phi$ such that in no clause all values are equal. We construct a \mincover for $G_I$ by specifying a scan order. We start
with the variable gadgets. For each $x\in X$, scan all edges of $G_x$ with a
\mincover, as follows. If $\phi(x)=1$, the connector vertices rotate clockwise, otherwise counterclockwise.
By
\cref{lem:variable-gadget}.2, 
such a scan order exists.
Next, we scan all the wire gadgets $G_w$ such that both connector vertices of $G_w$ rotate in the same direction as the one attached to the vertex gadget did in the vertex gadget. By \cref{lem:wire-gadget}, such a scan order exists.
	Finally, scan the clause gadgets $G_c$. Each connector vertex already
rotated (counter-) clockwise, depending on $\phi(x)$ of the attached variable
$x$. Because $\phi$ is a valid (non-equal) assignment, no three connector vertices in
a clause gadget rotate in the same direction.
Thus, by \cref{lem:clause-gadget} a \mincover of this clause gadget exists.
	Therefore, all edges can be scanned by a \mincover.

It remains to show the converse.
Let $S$ be a \mincover of $G_I$.
We define $\varphi(v)=1$ if $v$ rotates clockwise over its $\Lambda$-cone, and $\varphi(v)=0$ otherwise.
We claim that if $x_c$ is a connector vertex in gadget
$G_x$, $\phi(x) = \varphi(x_c)$ yields a satisfying truth assignment to $I$.
	By \cref{lem:variable-gadget}, in any \mincover of $G_I$, all
$\varphi(x_c)$ are equal for all connector vertices in $G_x$, thus $\phi(x)$ is well defined. By \cref{lem:wire-gadget}, in any \mincover, both connector vertices in gadget $G_w$ rotate in the
same direction. Finally, by \cref{lem:clause-gadget}, no clause gadget
$G_c$ contains three connector vertices rotating in the same direction in a \mincover.
Therefore, $\phi(x)$ is a valid (non-equal) assignment of $I$, i.e., $I$ is satisfiable.
Because the reduction from $I$ to $G_I$ can be done in time polynomial in the number of clauses and variables, this concludes the proof.
\end{proof}

\nphardcor*
\begin{proof}
Given an instance $I$  of \textsc{MNAE3SAT}, consider the graph $G_I$ constructed in the proof of \cref{thm:NP-hard-2D}.
Recall that by \cref{clm:lambdacone}, we may assume that the angle of the $\Lambda$-\cone{s} coincide for all vertices. Define $\theta_{\max}:= \ang{360}-\Lambda(v)$ for some vertex $v$. Let $\theta_{\min}$ denote the minimal angle over all incident pairs of edges, for which both vertices have degree $>1$.

In a scan cover $S$ for $G_I$ that is not a \mincover, there exists a vertex $v$ with a total rotation angle exceeding $\Lambda(v)$.
Consequently, either $v$ scans its edges in a unidirectional rotation including the angle of size $\theta_{\max}$ (while possibly not scanning a smaller angle $\varepsilon$), or some angle between two edges at $v$ is covered at least twice.
We may assume that both vertices of those two edges have degree $>1$, because edges with a vertex of degree $1$ can be reordered freely to get an unidirectional rotation. Therefore, this angle has size at least $\theta_{\min}$.

Thus the objective value of $S$ is at least
$\min(\ang{360}-\varepsilon, \ang{360} -\theta_{\max} + \theta_{\min})$, while the cost of a \mincover is
$\ang{360}-\theta_{\max}$. Let $\alpha$ be the ratio between these two quantities. To prove the corollary, it is sufficient to prove $\alpha > 1.04$.

By checking all gadgets, we observe that $\theta_{\max} = \arctan(1/2)$
(the angle outside the $\Lambda$-\cone of $v_5$ in the wire fragment, see \cref{fig:swap-fragment}) and $\theta_{\min}\geq \arctan(1/4)$ (which is also realized at $v_5$). The angle $\varepsilon$ can be chosen smaller than any given constant by adding additional edges. This results in $\alpha \geq \frac{\ang{360} -\arctan(1/2) + \arctan(1/4)}{\ang{360} - \arctan(1/2)}\approx
1.042$.
\end{proof}

\end{document}